\documentclass[11pt]{article}

\usepackage[letterpaper,margin=1in]{geometry}
\usepackage[T1]{fontenc}
\usepackage{libertinus}
\usepackage{amsmath,amssymb,amsthm,mathtools}
\usepackage{aliascnt}
\usepackage{braket}
\usepackage{microtype}
\usepackage{needspace}
\usepackage{placeins}
\usepackage{enumitem}
\usepackage{booktabs}
\usepackage{array,tabularx}
\usepackage{nicematrix}
\usepackage[T1]{fontenc}

\usepackage[table]{xcolor}
\definecolor{LinkTeal}{HTML}{006F69}
\definecolor{TableHeader}{HTML}{F1F3F3}
\definecolor{TableRule}{HTML}{65716D}

\newcolumntype{C}[1]{>{\centering\arraybackslash}m{#1}}
\newlength{\comparisontextwidth}
\newcommand{\comparisoncell}[1]{\vspace{3pt}#1\par\vspace{3pt}}

\usepackage{titlesec}
\usepackage{titling}
\usepackage{hyperref}
\usepackage[nameinlink,noabbrev]{cleveref}

\allowdisplaybreaks
\numberwithin{equation}{section}
\setlist{leftmargin=1.7em,itemsep=0.15em,topsep=0.35em}

\titleformat{\section}
  {\large\bfseries}
  {\thesection}{0.65em}{}
\titleformat{\subsection}
  {\normalsize\bfseries}
  {\thesubsection}{0.6em}{}
\titlespacing*{\section}{0pt}{2.1ex plus 0.5ex minus 0.2ex}{0.8ex}
\titlespacing*{\subsection}{0pt}{1.5ex plus 0.4ex minus 0.2ex}{0.45ex}

\pretitle{\begin{center}\LARGE\bfseries}
\posttitle{\par\end{center}\vspace{0.45em}}
\preauthor{\begin{center}\normalsize}
\postauthor{\par\end{center}}
\predate{\begin{center}\small}
\postdate{\par\end{center}\vspace{-0.8em}}

\newtheorem{theorem}{Theorem}[section]
\newaliascnt{lemma}{theorem}
\newtheorem{lemma}[lemma]{Lemma}
\aliascntresetthe{lemma}
\crefname{lemma}{lemma}{lemmas}
\Crefname{lemma}{Lemma}{Lemmas}
\newaliascnt{fact}{theorem}
\newtheorem{fact}[fact]{Fact}
\aliascntresetthe{fact}
\crefname{fact}{fact}{facts}
\Crefname{fact}{Fact}{Facts}
\newaliascnt{proposition}{theorem}
\newtheorem{proposition}[proposition]{Proposition}
\aliascntresetthe{proposition}
\crefname{proposition}{proposition}{propositions}
\Crefname{proposition}{Proposition}{Propositions}
\newaliascnt{corollary}{theorem}
\newtheorem{corollary}[corollary]{Corollary}
\aliascntresetthe{corollary}
\crefname{corollary}{corollary}{corollaries}
\Crefname{corollary}{Corollary}{Corollaries}
\theoremstyle{definition}
\newaliascnt{definition}{theorem}
\newtheorem{definition}[definition]{Definition}
\aliascntresetthe{definition}
\crefname{definition}{definition}{definitions}
\Crefname{definition}{Definition}{Definitions}
\theoremstyle{remark}
\newaliascnt{remark}{theorem}
\newtheorem{remark}[remark]{Remark}
\aliascntresetthe{remark}
\crefname{remark}{remark}{remarks}
\Crefname{remark}{Remark}{Remarks}

\newcommand{\QAC}{\mathrm{QAC}^{0}}
\newcommand{\BQAC}{\mathrm{BQAC}^{0}}
\newcommand{\AC}{\mathrm{AC}^{0}}
\newcommand{\TC}{\mathrm{TC}^{0}}

\newcommand{\ConstForr}{\mathsf{ConstGap2Forr}}
\newcommand{\PromiseTC}{\operatorname{Promise}\text{-}\TC}
\newcommand{\CZ}{\mathsf{CZ}}
\newcommand{\Forr}{\operatorname{Forr}}
\newcommand{\wt}{\operatorname{wt}}
\newcommand{\polylog}{\operatorname{polylog}}

\newcommand{\GapForr}{\mathsf{Gap2Forr}}
\newcommand{\oh}{\operatorname{oh}}

\newcommand{\PromiseBQAC}{\operatorname{Promise}\text{-}\BQAC}
\newcommand{\PromiseAC}{\operatorname{Promise}\text{-}\AC}

\hypersetup{
  colorlinks=true,
  allcolors=LinkTeal,
  linkcolor=LinkTeal,
  citecolor=LinkTeal,
  urlcolor=LinkTeal,
  filecolor=LinkTeal,
  pdfborder={0 0 0},
  pdftitle={2-Fold Forrelation is in QAC0},
  pdfauthor={Francisca Vasconcelos}
}
\title{2-Fold Forrelation is in \texorpdfstring{$\QAC$}{QAC0}}
\author{Francisca Vasconcelos \vspace{0.1in}\\  UC Berkeley \vspace{0.1in}\\ \footnotesize \url{francisca@berkeley.edu}}
\date{\vspace{-0.15in}}

\begin{document}
\maketitle

\begin{abstract}
    We show that 2-fold Forrelation with inverse-polylogarithmic promise gap can be solved, with bounded error, by polynomial-size $\QAC$ circuits. Unlike the standard oracle-based Forrelation algorithm, our circuits receive the input explicitly, in the same form as the $\AC$ circuits against which Forrelation is known to be hard. At constant gap, this yields a natural promise-problem separation between $\QAC$ and $\AC$.
\end{abstract}
\vspace{.2in}
\section{Introduction}

A central open question in shallow quantum circuit complexity is whether Parity $\in \QAC$ \cite{Moore1999}. Here, $\QAC$ denotes constant-depth quantum circuits with arbitrary single-qubit gates and generalized Toffoli gates. $\QAC$ is a quantum analogue of $\AC$, or polynomial-size, constant-depth classical circuits composed of unbounded-fanin AND/OR gates and NOT gates. One fundamental difference between the two models is fan-out. $\AC$ circuits can copy a bit to arbitrarily many gates at no depth cost, whereas it remains open whether the corresponding unbounded quantum fan-out operation can be implemented by polynomial-size $\QAC$ circuits. Unbounded quantum fan-out is known to substantially increase the power of shallow quantum circuits \cite{HoyerSpalek2005} and is equivalent to Parity up to constant-depth reductions \cite{GreenHomerMoorePollett2002,HoyerSpalek2005}. Thus, determining whether Parity $\in \QAC$ is closely related to understanding whether $\QAC$ can achieve unbounded fan-out. Since Parity $\notin \AC$ \cite{Hastad1986}, a positive answer would also give an immediate decision separation between $\QAC$ and $\AC$.

Recent work has substantially advanced our understanding of $\QAC$'s powers and limitations. Upper-bound results include general Dicke-state preparation \cite{JoshiVasconcelos2026}, pseudorandom unitary constructions \cite{FoxmanParhamVasconcelosYuen2025}, exact polylogarithmic fan-out \cite{Rosenthal2021,GrierMorrisWu2026}, and the simulation of classical threshold circuits on replicated inputs \cite{GrierMorrisWu2026}. A growing body of work has developed learning algorithms and structural results for $\QAC$ \cite{BaoEscuderoGutierrez2025,DongOuYao2025,VasconcelosHuang2025,GrettaGuptaJoshi2026}, alongside increasingly strong circuit lower bounds \cite{Rosenthal2021,NadimpalliParhamVasconcelosYuen2024,AnshuDongOuYao2025,FennerGrierPadeThierauf2025,JoshiTalVasconcelosWright2025}. More recently, decision separations based on Majority have emerged as well, including a pointwise separation using replicated inputs \cite{GrierMorrisWu2026} and a correlation separation in the standard single-copy model \cite{GrettaGuptaJoshi2026}. These developments motivate the search for further natural decision problems that illuminate the power of $\QAC$ relative to $\AC$.

We show that 2-fold Forrelation provides such an example. Introduced to exhibit a large separation between quantum and randomized query complexity \cite{AaronsonAmbainis2018}, Forrelation is one of the canonical problems in quantum complexity theory and a natural candidate for studying shallow quantum--classical separations. Let $N=2^m$, and let $H_N$ denote the normalized Walsh--Hadamard matrix indexed by $u,v\in\{0,1\}^m$, with
\begin{equation}
(H_N)_{u,v}
:=
\frac{(-1)^{u\cdot v}}{\sqrt N}.
\label{eq:hadamard}
\end{equation}
For sign vectors $s,t\in\{\pm1\}^N$, their 2-fold Forrelation is
\begin{equation}
\Forr_N(s,t)
:=
\frac{1}{N}s^{\mathsf T}H_Nt
=
\frac{1}{N^{3/2}}
\sum_{u,v\in{0,1}^{m}}
s_ut_v(-1)^{u\cdot v}.
\label{eq:intro-forrelation}
\end{equation}
Our main result shows that this problem also separates shallow quantum and classical circuits.
\begin{theorem}[Informal main theorem; see \Cref{thm:main-upper}]
For every inverse-polylogarithmic gap, 2-fold Forrelation can be solved with bounded error by polynomial-size constant-depth $\QAC$ circuits, with pointwise correctness on every input satisfying the promise. In particular, at constant gap,
\begin{equation}
\ConstForr
\in
(\PromiseBQAC\cap\PromiseTC)\setminus\PromiseAC.
\label{eq:intro-separation}
\end{equation}
\end{theorem}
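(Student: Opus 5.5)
The plan is to split \eqref{eq:intro-separation} into its three parts and treat them separately. The quantum upper bound is the main work. $\ConstForr\in\PromiseTC$ follows because $\Forr_N(s,t)$ is an explicit sum of $N^2$ signed terms $s_ut_v(-1)^{u\cdot v}$. Each term is a conjunction-free function of $O(\log N)$ bits, given a parity of $m$ bits, and $m$-bit parity lies in $\AC$. Iterated addition of $N^2$ bits and comparison against a threshold then lie in $\TC$. For $\ConstForr\notin\PromiseAC$ I would invoke the known $\AC$ hardness of Forrelation with constant gap, which is the Raz--Tal line of work. I would only check that their input encoding, the explicit truth tables $s,t$, matches ours, and that constant gap is inherited from their distributional statement.

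For the $\QAC$ algorithm I would aim to realize the standard Forrelation circuit, whose acceptance amplitude is $\langle \hat s|H_N^{\otimes}|\hat t\rangle=\Forr_N(s,t)$ with $|\hat s\rangle=N^{-1/2}\sum_u s_u|u\rangle$, using explicit inputs. The steps, in order, are as follows.
\begin{enumerate}
\item Prepare a W state $N^{-1/2}\sum_u|e_u\rangle$ on $N$ qubits using the Dicke-state preparation of \cite{JoshiVasconcelos2026}.
\item Apply $Z$ to qubit $u$ controlled on the classical input bit $s_u$. This takes depth one, because every phase acts on its own qubit and no fan-out of an index register is needed.
\item Fan out each one-hot qubit to $m=\log N$ copies using exact polylogarithmic fan-out \cite{Rosenthal2021,GrierMorrisWu2026}.
\item Compute the binary index $u_j=\bigvee_{w:w_j=1}(e_u)_w$ into $m$ fresh qubits, using one generalized Toffoli (OR) per bit on disjoint copies.
\end{enumerate}
This produces $N^{-1/2}\sum_u s_u|u\rangle|\mathrm{garbage}_u\rangle$. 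The symmetric procedure with $t$ then gives the other side. The remaining steps are to apply $H^{\otimes m}$ to the $m$-qubit binary register, which has depth one, run the $t$-side preparation in reverse, and accept on all-zeros. The acceptance probability is then $\Forr_N(s,t)^2$, up to a Hadamard-test variant that exposes the sign. Finally, $O(\polylog N)$ parallel repetitions followed by a $\QAC$-computable threshold on the outcomes amplify an inverse-polylogarithmic gap to bounded error. For this last step I would use the replicated-input threshold simulation of \cite{GrierMorrisWu2026}, applied to the polylogarithmically many independent outcomes.

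The main obstacle is the garbage. The one-hot register $|e_u\rangle$ stays entangled with $|u\rangle$, so applying $H^{\otimes m}$ to the binary register alone does not give interference. Uncomputing $e_u$ from $u$ requires each binary qubit to be fanned out to $N$ copies, which is exactly the forbidden unbounded fan-out. My first attempt would be to avoid uncomputation: I would express $(-1)^{u\cdot v}=\prod_j(-1)^{u_jv_j}$ and work on a product of two one-hot registers, one for $u$ and one for $v$. Here the Hadamard phase becomes $m$ disjoint $\CZ$ gates between binary copies, and only polylogarithmic fan-out is used. Then I would look for a normalization trick that recovers amplitude $\Forr_N$ rather than $\Forr_N/\sqrt N$, for example by postselecting on a binary-register measurement whose success probability can be boosted in parallel. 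If that fails, the fallback is to tolerate approximate uncomputation, with error $1/\polylog N$ absorbed by the promise gap, via approximate large fan-out on the restricted weight-one subspace. I expect this step to be the crux of the proof.
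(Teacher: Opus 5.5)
\paragraph{There is a genuine gap: the proposal never produces a working $\QAC$ algorithm.}
Your ``first attempt'' is already the right base test, and it is the paper's. Extract binary labels from two one-hot registers with polylogarithmic fan-out and ORs, apply $m$ disjoint $\CZ$ gates between the labels, and uncompute the labels. Then project back onto $\ket{W_N}\ket{W_N}$, by applying $U_W^\dagger$ to each register and flagging the all-zero string with a generalized NOR. This accepts with probability exactly $\Forr_N(x,y)^2/N$.

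The gap is in what you do next. Your primary route, applying $H^{\otimes m}$ to the binary register, fails, as you note. Both proposed repairs aim at the wrong target:
\begin{enumerate}
\item Approximately uncomputing the one-hot register from $u$ is the binary-to-one-hot conversion itself. No polynomial-size constant-depth $\QAC$ circuit is known for it, and the only known approximate fan-out (Rosenthal's) has size exponential in the width, which here is $N$.
\item The natural postselection that restores amplitude $\Forr_N$ is projecting one register onto $\ket{W_N}$. It succeeds with probability exactly $1/N$, so boosting it in parallel needs $\Theta(N)$ trials and runs into the same input-reuse problem.
\end{enumerate}
Your amplification step ($\polylog N$ repetitions plus a threshold) only fits an amplitude-$\Forr_N$ test that you never build.

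\paragraph{The missing idea: the $1/\sqrt N$ loss need not be removed.}
The probability $\Forr_N^2/N$ has no constant background term. Under the promise, YES inputs accept with probability at least $\delta^2/N$ and NO inputs with at most $\delta^2/(4N)$, a constant multiplicative gap. ORing $R=\lceil 5N/(4\delta^2)\rceil$ independent tests then gives acceptance at least $1-e^{-5/4}>0.71$ versus at most $5/16+o(1)<1/3$.

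The real obstacle is input reuse. Naively, $R\approx N\polylog N$ copies of $\ket{W_x}$ require fanning out each $x_a$ to $R$ phase gates. The paper resolves this with the Grier--Morris--Wu phase-state repetition:
\begin{enumerate}
\item Prepare $\ket{W_N}^{\otimes R}$.
\item On each branch, the accumulated phase is $(-1)^{\sum_a x_a(C_a\bmod 2)}$, where $C_a$ counts the registers occupying address $a$.
\item Compute each occupancy parity with the truncated parity $h_T$ for $T=\polylog N$, apply a single $\CZ$ involving $x_a$, and uncompute.
\item A balls-into-bins bound makes the truncation error $N^{-\omega(1)}$.
\end{enumerate}

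\paragraph{A smaller point on the lower bound.}
Raz--Tal's forrelated distribution has Forrelation about $\epsilon=\Theta(1/\log N)$, so it does not by itself give hardness at the constant gap $\delta_0$. The paper instead combines the Bansal--Sinha distributions, which produce constant-gap YES instances with constant probability, with Tal's level-two Fourier-growth bound.
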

\noindent The classical side of the separation follows from known lower bounds for Forrelation
\cite{Tal2017,RazTal2022,BansalSinha2021}. The main technical contribution is therefore the $\QAC$ upper-bound. At first sight, such an upper-bound might seem immediate from the standard quantum algorithm for Forrelation, which already has constant query depth. That algorithm is naturally formulated using a \emph{binary-addressed} phase oracle. Writing
$s_u=(-1)^{x_u}$ and $t_v=(-1)^{y_v}$ for
$x,y\in\{0,1\}^N$, with $N=2^m$, the oracle for $x$ acts on an $m$-qubit address register as
\begin{equation}
\ket{u}
\longmapsto
(-1)^{x_u}\ket{u}.
\label{eq:binary-phase-oracle}
\end{equation}
In this binary representation, the rest of the Forrelation computation is extremely simple. The Walsh--Hadamard transform is just $H_N=H^{\otimes m}$, so the desired interference is obtained by alternating Hadamard layers with the two phase queries. The recent work of Buzet and Chailloux sharpened this observation, showing that even the computation surrounding the two queries can be restricted to two IQP circuits together with efficient classical processing \cite{BuzetChailloux2026}. Thus, from the perspective of shallow circuits, the main obstacle is not implementing the Hadamard interference pattern, but realizing the binary-addressed queries themselves from the explicit input strings.

Indeed, the oracle in \Cref{eq:binary-phase-oracle} performs a coherent indexing operation. Concretely, an $m$-qubit binary address $u$ must select one of the $N$ input wires
$x_1,\ldots,x_N$ and apply a phase determined by the selected bit. A natural way to realize this lookup is to first convert the binary address into its one-hot (or unary) representation,
\begin{equation}
\ket{u}\ket{0^N}
\longmapsto
\ket{u}\ket{\oh(u)}.
\label{eq:intro-binary-to-onehot}
\end{equation}
Once this conversion is available, the lookup becomes trivial. Namely, if $U_a$ denotes the $a$th qubit of the one-hot register, then the $N$ disjoint gates
$\CZ(X_a,U_a)$ implement
\begin{equation}
\ket{\oh(u)}
\longmapsto
(-1)^{x_u}\ket{\oh(u)},
\label{eq:intro-unary-phase-oracle}
\end{equation}
after which the binary-to-one-hot conversion can be reversed.

This exposes precisely where the standard query algorithm encounters the unresolved power of $\QAC$. In particular, unbounded quantum fan-out suffices to perform binary-to-one-hot conversion in constant depth. Namely, after distributing the $m$ address bits, all $N$ equality predicates $[u=a]$ can be evaluated in parallel. Without unbounded fan-out, however, no polynomial-size constant-depth $\QAC$ construction for this coherent indexing operation is known. Thus, directly realizing the standard binary-addressed phase oracle would require solving the binary-to-one-hot conversion problem that underlies this fan-out bottleneck.

Our construction avoids this bottleneck by working directly in the one-hot representation, rather than starting from a binary-addressed register. In particular, we exploit the fact that the $N$-qubit $W$ state
\begin{equation}
\ket{W_N}
=
\frac{1}{\sqrt N}
\sum_{u\in{0,1}^m}
\ket{\oh(u)}
\label{eq:intro-W}
\end{equation}
can be prepared in $\QAC$ \cite{GrierMorrisWu2026,JoshiVasconcelos2026}. For an input string $x\in\{0,1\}^N$, define the corresponding phase-encoded $W$ state by
\begin{equation}
\ket{W_x}
:=
\frac{1}{\sqrt N}
\sum_{u\in{0,1}^m}
(-1)^{x_u}\ket{\oh(u)}.
\label{eq:intro-phase-W}
\end{equation}
Because the address is represented one-hot, these phases can be applied transversally. In particular, the $N$ disjoint gates $\CZ(X_a,U_a)$ implement
\begin{equation}
\ket{x}\ket{W_N}
\longmapsto
\ket{x}\ket{W_x}.
\label{eq:intro-phase-W-map}
\end{equation}
Thus, the input dependence can be incorporated directly into the $W$ state, with no coherent lookup required. Crucially, the circuit coherently sums the amplitudes corresponding to all address pairs $(u,v)$, with phases $(-1)^{x_u+y_v+u\cdot v}$, so that the resulting amplitude is proportional to $\Forr_N(x,y)$ before it is squared by measurement. This gives the rare-event acceptance probability
\begin{equation}
q(x,y)
=
\frac{\Forr_N(x,y)^2}{N}.
\end{equation}
By contrast, the elementary classical two-query test accepts with probability
\begin{equation}
p(x,y)
=
\frac12+\frac{\Forr_N(x,y)}{2\sqrt N}.
\end{equation}
Thus, the classical signal appears only as a small bias around the constant baseline $1/2$, whereas quantum interference turns the same correlation into a probability with no constant background term.

Under the promise, the quantum test accepts with probability at least $\delta^2/N$ on YES instances and at most $\delta^2/(4N)$ on NO instances. Although both probabilities vanish with $N$, they differ by a constant multiplicative factor. Running $R=\Theta(N/\delta^2)$ copies in parallel and accepting if any copy accepts converts this multiplicative separation into a constant additive gap. Note that the same strategy would not work for the classical two-query test. Specifically, each trial already accepts with probability close to $1/2$. So, accepting whenever any trial succeeds would drive both cases close to acceptance probability one.

It remains to implement these repetitions in $\QAC$ using only a single copy of the input. Naïvely preparing $\ket{W_x}^{\otimes R}$ would require that each input bit $x_a$ participate in $R$ different phase gates, which in turn would require $O(R)$-fan-out of the input bits. Since unbounded fan-out is not available in $\QAC$, we instead avoid this input-reuse bottleneck using the truncated phase-state repetition construction of Grier--Morris--Wu \cite[Appendix~C.1, Claim~29]{GrierMorrisWu2026}. We briefly explain the idea, since it is the key ingredient that enables us to implement the amplification in $\QAC$. If $C_a$ denotes the number of one-hot registers selecting address $a$, then
\begin{equation}
(-1)^{\sum_{j=1}^{R}x_{u_j}}
=
(-1)^{\sum_{a=1}^{N}x_a(C_a\bmod 2)}.
\label{eq:intro-occupancy-phase}
\end{equation}
Hence all dependence on a fixed input bit $x_a$ can be consolidated into a single phase gate controlled by the parity of its occupancy $C_a$. Grier--Morris--Wu show that this parity can be computed in constant depth whenever the occupancy is below a polylogarithmic cutoff $T$. Outside this low-weight regime, the circuit may output an arbitrary fixed value. In our inverse-polylogarithmic-gap regime, the occupancies exceed this cutoff only on branches of negligible total weight. Thus, this \emph{truncated parity} computation reproduces the desired phases up to negligible error, allowing all $R$ effective tests to be realized in constant depth. Together, the one-hot interference test and the phase-state repetition construction give the $\QAC$ upper bound of \Cref{thm:main-upper}.

We next situate our result relative to the Raz--Tal oracle separation, which is also based on Forrelation. For each $N=2^m$, both results concern the same $N$-bit input strings indexed by addresses $u\in\{0,1\}^m$, but they differ in how the quantum computation accesses those bits. Raz and Tal assume coherent access through the binary-addressed phase oracle in \Cref{eq:binary-phase-oracle}, allowing their quantum algorithm to run in time polynomial in the address length $m=\log N$. Our $\QAC$ circuit instead receives the $N$ input bits directly as ordinary circuit input wires. Thus, the circuit may have size polynomial in $N$ and, hence, exponential in $m$. For this reason, our upper bound does not by itself yield the polynomial-in-$m$ algorithm required for an analogous oracle separation from $\mathsf{PH}$. The advantage of the explicit-input setting is that the $\QAC$ and $\AC$ circuits receive the input in exactly the same form. Overall, our result gives a direct decision separation between the two circuit classes on a common input model.

More broadly, our result complements the recent decision separations of Grier--Morris--Wu and Gretta--Gupta--Joshi. Grier--Morris--Wu show that $\QAC$ can simulate $\TC$ when given polynomially many copies of the input, and use this to obtain a pointwise separation from $\AC[p]$ for every fixed prime $p$ \cite{GrierMorrisWu2026}. Their separating function is a copy-augmented version of Majority. The input consists of several blocks, the function agrees with Majority when those blocks are consistent, and it takes a fixed value otherwise. Although this gives a standard total-function separation, the replicated-input structure effectively supplies at the outset the copies that unbounded fan-out would otherwise have to create within the quantum circuit. In this sense, the separation assumes access to a resource closely related to the main fan-out obstacle in distinguishing $\QAC$ from $\AC$. Gretta--Gupta--Joshi instead study Majority on its standard single-copy input and obtain $\QAC$ circuits with $1-o(1)$ correlation under the uniform distribution \cite{GrettaGuptaJoshi2026}.\footnote{Gretta--Gupta--Joshi state their separation against $\AC$. Combining their quantum upper bound with Razborov--Smolensky-type correlation bounds for Majority also yields an average-case separation from polynomial-size $\AC[p]$ circuits for every fixed prime $p$. Such circuits have only $o(1)$ correlation with Majority under the uniform distribution. See, e.g., \cite[Propositions~3.4 and~4.2]{OliveiraSanthanam2015}.} Our result likewise uses a standard single-copy input, but for the canonical Forrelation problem, and achieves bounded-error correctness on every input satisfying the promise. Thus, compared with Grier--Morris--Wu, we avoid assuming replicated access to the input, while compared with Gretta--Gupta--Joshi, we obtain a pointwise rather than distributional guarantee. The tradeoff is that Forrelation is a promise problem and our classical lower bound is only against $\AC$. \Cref{tab:decision-separations} summarizes these complementary results.

\begin{table}[t]
\centering
\begingroup
\small
\setlength{\tabcolsep}{5pt}
\setlength{\arrayrulewidth}{0.4pt}
\renewcommand{\arraystretch}{1.18}
\arrayrulecolor{TableRule}
\setlength{\comparisontextwidth}{%
\dimexpr\linewidth-10\tabcolsep-6\arrayrulewidth\relax}
\begin{NiceTabular}{|C{0.105\comparisontextwidth}|
C{0.190\comparisontextwidth}|
C{0.200\comparisontextwidth}|
C{0.245\comparisontextwidth}|
C{0.260\comparisontextwidth}|}
\CodeBefore
\rowcolor{TableHeader}{1}
\Body
\hline
\comparisoncell{\textbf{Paper}}
& \comparisoncell{\textbf{Problem}}
& \comparisoncell{\textbf{Input Model}}
& \comparisoncell{\textbf{Quantum Guarantee}}
& \comparisoncell{\textbf{Classical Hardness}} \\
\hline
\comparisoncell{\cite{GrierMorrisWu2026}}
& \comparisoncell{Copy-augmented Majority (total function)}
& \comparisoncell{Replicated input}
& \comparisoncell{Pointwise one-sided error on every input}
& \comparisoncell{Not in $\AC[p]$, for every fixed prime $p$} \\
\hline
\comparisoncell{\cite{GrettaGuptaJoshi2026}}
& \comparisoncell{Majority (total function)}
& \comparisoncell{Standard single-copy input}
& \comparisoncell{$1-o(1)$ correlation under the uniform distribution}
& \comparisoncell{Only $o(1)$ correlation for polynomial-size $\AC[p]$, for every fixed prime $p$} \\
\hline
\comparisoncell{This work}
& \comparisoncell{2-fold Forrelation (promise problem)}
& \comparisoncell{Standard single-copy input}
& \comparisoncell{Pointwise bounded error on every promised input}
& \comparisoncell{ Depth-$d$ randomized $\AC$ requires $\exp(N^{\Omega_d(1)})$ size} \\
\hline
\end{NiceTabular}

\smallskip
\endgroup
\caption{Comparison of recent decision separations for shallow quantum circuits.
The results differ in their input model, quantum guarantee, and strength of
the classical lower bound.}
\label{tab:decision-separations}
\end{table}

The remainder of the paper is organized as follows. \Cref{sec:preliminaries} fixes the circuit model and Forrelation promise and records the prior results used in the proof. \Cref{sec:main-results} states the formal quantum upper bound and the resulting explicit-input separation. Finally, \Cref{sec:upper-bound} proves the quantum upper-bound by constructing a single one-hot Forrelation test and amplifying it in constant depth using the phase-state repetition construction of Grier--Morris--Wu.

\FloatBarrier
\section{Preliminaries}
\label{sec:preliminaries}

We fix the circuit conventions and Forrelation promise used throughout the paper, recording the quantum primitives and classical bounds needed in the proof.

\subsection{Circuit Classes and Forrelation}
\label{sec:model-forrelation}
A $\QAC$ circuit has constant depth and polynomial size, consisting of arbitrary one-qubit gates and arbitrary-width generalized Toffoli gates (equivalently, generalized controlled-$Z$ gates) with disjoint supports within each layer. All ancillae are initialized to $\ket0$. Unbounded fan-out, intermediate measurements, postselection, and oracle gates are not available. We write $\PromiseBQAC$ for the corresponding bounded-error promise class, with completeness $2/3$ and soundness $1/3$. Classically, $\AC$ denotes polynomial-size constant-depth circuits with unbounded-fanin AND/OR gates, NOT gates, and unrestricted fan-out. $\TC$ additionally allows threshold gates. We write $\PromiseAC$ and $\PromiseTC$ for the corresponding promise classes.

Throughout, $N=2^m$, so each Forrelation input string has length $N$ and its coordinates are indexed by $\{0,1\}^m$. The total input length is therefore $2N$, while $m=\log_2N$ is the address length. We use the definition of $\Forr_N(x,y)$ from \Cref{eq:intro-forrelation}.
\begin{definition}[Gap 2-fold Forrelation]
\label{def:gap-forr}
For $\delta=\delta(N)\in(0,1]$, define
\begin{equation}
\GapForr_{N,\delta}(x,y)=
\begin{cases}
1,&\Forr_N(x,y)\ge\delta,\\
0,&|\Forr_N(x,y)|\le\delta/2,
\end{cases}
\label{eq:gap-promise}
\end{equation}
with no requirement outside the promise.
\end{definition}
\noindent For the constant-gap separation, we fix $\delta_0:=2^{-10}$ and write
$$
    \ConstForr
    :=
    \{\GapForr_{N,\delta_0}\}_{N=2^m}.
$$

\subsection{$\QAC$ primitives}
\label{sec:primitives}

The proof uses four ingredients from prior work: 1) exact polylogarithmic fan-out, 2) exact preparation of $W$ states, 3) low-weight Boolean computations, and 4) a phase-state repetition construction. We record the precise forms needed below.

\paragraph{1) Exact Polylogarithmic Fan-Out.}
We begin with the limited form of fan-out that will be used throughout the construction. Rosenthal showed that Parity and quantum fan-out can be approximated in constant depth by $\QAC$ circuits with size exponential in the fan-out width \cite[Corollary~1.2]{Rosenthal2021}. Restricting the width to $\polylog N$ therefore already gives polynomial-size approximate fan-out, an observation used, for example, in the constant-depth pseudorandomness constructions of \cite{FoxmanParhamVasconcelosYuen2025}. More recently, Grier--Morris--Wu obtained an exact polynomial-size constant-depth implementation for polylogarithmic fan-out \cite[Corollary~10]{GrierMorrisWu2026}.

\begin{fact}[Exact polylogarithmic fan-out]
\label{fact:small-fanout}
For every fixed $a\ge1$, fan-out to at most $\log^aN$ targets has an exact polynomial-size constant-depth $\QAC$ implementation.
\end{fact}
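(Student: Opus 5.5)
The plan is to build exact fan-out to $k\le\log^a N$ targets by composing two standard ingredients. The first is the equivalence, up to constant-depth reductions, between fan-out and Parity (equivalently, a $k$-qubit controlled phase by $(-1)^{\text{parity}}$) \cite{GreenHomerMoorePollett2002,HoyerSpalek2005}. The second is the observation that any Boolean function on $k=\polylog N$ bits can be written as a sum over its $2^k=N^{\polylog}$-size support. Polylogarithmic width gives quasi-polynomial size, which is not good enough, so the construction must be more careful.

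I would proceed as follows. First, reduce fan-out on $k$ targets to computing $\mathrm{Parity}_k$ reversibly into a fresh ancilla. Conjugating by Hadamards converts fan-out $\ket{b}\ket{z_1\cdots z_k}\mapsto\ket{b}\ket{z_1\oplus b,\dots}$ into a parity-phase computation on $k+1$ qubits. It therefore suffices to implement, exactly and in constant depth, the unitary $\ket{z}\ket{c}\mapsto\ket{z}\ket{c\oplus z_1\oplus\cdots\oplus z_k}$ with all ancillae returned to $\ket0$.

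Second, implement $\mathrm{Parity}_k$ exactly via a recursive block decomposition. Split the $k$ bits into blocks of size $b=\Theta(\log N/\log\log N)$. For each block, compute its parity exactly by the brute-force DNF over its $2^b=\mathrm{poly}(N)$ odd-weight assignments. This is one layer of generalized Toffolis on disjoint copies, followed by an OR; the terms are mutually exclusive, so the OR is an XOR of exact indicator terms. The catch is that this requires $2^b$ copies of each input bit, which is itself fan-out. I would handle this by running the same construction at a smaller scale: fan-out of width $2^b$ can be built from fan-out of width $O(\log N)$ through a constant number of levels. At each level the width shrinks by a $\log$ factor, so after $O(a)$ levels we reach constant width, where fan-out is just a CNOT layer. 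Uncomputation then restores all ancillae exactly.

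The hardest step is making the block DNF exact, with no garbage, while keeping the copy counts polynomial. Rosenthal's construction \cite{Rosenthal2021} is only approximate. I would first try a GHZ-style "cat state" trick, which creates $k$ entangled copies of a control via a constant-depth sequence of generalized Toffolis whose error vanishes exactly when controls are classical. If that fails, I would fall back on the Grier--Morris--Wu exact construction \cite[Corollary~10]{GrierMorrisWu2026}, which in any case is what the cited fact records. Since this is stated as a Fact imported from prior work, the proof in the paper is presumably just that citation.
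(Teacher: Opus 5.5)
The paper does not prove this statement. It imports it directly from \cite[Corollary~10]{GrierMorrisWu2026}. Your closing fallback is therefore exactly what the paper does, and it is the only part of your proposal that actually establishes the Fact. Your Hadamard-conjugation reduction from fan-out to parity is correct, but it only trades one problem for another of the same width. The self-contained construction you sketch between these two points has a genuine gap.

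The gap is the fan-out that your block DNF itself requires. Writing the parity of a $b$-bit block as an OR of its $2^{b-1}$ odd-weight minterms means each input bit must feed $2^{b-1}$ disjoint generalized Toffolis, which is fan-out of width $2^{b-1}$. With $b=\Theta(\log N/\log\log N)$, this width is $2^{\Theta(\log N/\log\log N)}$, far larger than the target width $\log^aN$, so the reduction runs in the wrong direction.

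Your repair, building width-$2^b$ fan-out from width-$O(\log N)$ fan-out in constantly many levels, is false. A depth-$d$ tree of width-$w$ fan-outs reaches only $w^d$ targets, so constant depth gives only $\polylog N$ copies. Reaching $2^{\Theta(\log N/\log\log N)}$ would need $\Theta(\log N/(\log\log N)^2)$ levels.

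Shrinking $b$ to $O(\log\log N)$ does not help either, because the DNF would then need polylogarithmic fan-out, which is the statement being proved. In fact, once exact width-$\Theta(\log N)$ fan-out is available, the same tree argument gives width $\log^aN$ in $O(a)$ levels directly, with no parity or DNF needed. So the entire content of the Fact is this base case, and your argument never supplies it. Constant-depth CNOT trees reach only constant width, and Rosenthal's construction \cite{Rosenthal2021} is only approximate.

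The cat-state idea does not close the gap. You give no exact constant-depth preparation of a $k$-qubit cat state, which is what fan-out applied to $\ket{+}\ket{0^k}$ produces. Moreover, converting a cat state into a copy of an arbitrary control normally requires a measurement with parity-dependent corrections, which this model excludes.

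This circularity is precisely why the block-DNF argument works in $\AC$, where fan-out is free, but not in $\QAC$. The missing ingredient is the genuinely new exact construction of \cite{GrierMorrisWu2026}.
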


\noindent This limited form of fan-out does not resolve whether Parity lies in $\QAC$, or equivalently whether unrestricted fan-out admits a polynomial-size constant-depth $\QAC$ implementation. It nevertheless suffices for the parallel label-extraction and low-weight computations needed in our Forrelation implementation.

\paragraph{2) Exact Preparation of $W$ States.}
We next use the fact that the $N$-qubit $W$ state can be prepared exactly by polynomial-size constant-depth $\QAC$ circuits \cite[Theorem~17]{GrierMorrisWu2026}\cite[Theorem~1]{JoshiVasconcelos2026}.

\begin{fact}[Exact $W$-state preparation]
\label{fact:W-preparation}
There exists a clean polynomial-size constant-depth $\QAC$ unitary $U_W$ and $A=\operatorname{poly}(N)$ such that
\begin{equation}
U_W\ket{0^N}\ket{0^A}
=
\ket{W_N}\ket{0^A}.
\label{eq:exact-W}
\end{equation}
\end{fact}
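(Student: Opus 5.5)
The plan is to build the $W$ state as the weight-one part of a biased product state, and then to use exact amplitude amplification so that only that part survives.

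\textbf{Step 1 (biased product state).} Apply $R_y(\theta)$ to each of the $N$ output qubits, giving
\begin{equation*}
\ket{\psi_p}=\bigl(\sqrt{1-p}\ket0+\sqrt p\ket1\bigr)^{\otimes N},\qquad p=\sin^2(\theta/2).
\end{equation*}
Project $\ket{\psi_p}$ onto the Hamming-weight-one subspace. The result is exactly proportional to $\ket{W_N}$, because all $N$ weight-one strings carry the same amplitude $\sqrt{p(1-p)^{N-1}}$. The success probability is $g(p)=Np(1-p)^{N-1}$. This function is continuous, satisfies $g(0)=0$, and has maximum about $1/e>1/4$. I will therefore fix $p$ with $g(p)=1/4$ exactly, which is legitimate because arbitrary single-qubit rotations are allowed.

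\textbf{Step 2 (exact amplitude amplification).} Write $\ket{\psi_p}=\sin\phi\,\ket{W_N}+\cos\phi\,\ket{\mathrm{bad}}$ with $\sin\phi=1/2$, so $\phi=\pi/6$. One Grover iterate
\begin{equation*}
Q=-\,S_{\psi}\,S_{1},\qquad S_\psi=R\,(I-2\ket{0^N}\!\bra{0^N})\,R^\dagger,
\end{equation*}
maps $\ket{\psi_p}$ to $\sin(3\phi)\ket{W_N}+\cos(3\phi)\ket{\mathrm{bad}}=\ket{W_N}$. Here $R=R_y(\theta)^{\otimes N}$ and $S_1$ flips the sign of weight-one strings. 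The whole circuit is $Q\,R$. The reflection $S_\psi$ costs one layer of rotations, one $N$-wide generalized controlled-$Z$ sandwiched between $X$ layers, and another layer of rotations. This has constant depth and needs no ancillae.

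\textbf{Step 3 (the reflection $S_1$, the main obstacle).} Everything therefore reduces to implementing, cleanly and in constant depth, the phase $(-1)^{[|z|=1]}$ on $z\in\{0,1\}^N$. This is a symmetric predicate of the kind that normally needs counting. I would write $[|z|=1]=[|z|\le1]-[|z|=0]$, and the term $[|z|=0]$ is a single generalized Toffoli. The hard part is $[|z|\le 1]$, equivalently detecting a collision $z_a=z_b=1$ with $a\neq b$, without fanning each $z_a$ out to $N$ pair-gates.

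\begin{itemize}
\item \emph{First attempt.} Hash the $N$ positions into $\polylog N$ buckets so that each qubit needs only polylogarithmic fan-out (\Cref{fact:small-fanout}). Then evaluate bucket-level OR and pairwise-AND predicates, and uncompute. This works exactly on strings of polylogarithmic weight. The weight is concentrated near $Np\approx1/4$, so strings above the cutoff carry only $N^{-\omega(1)}$ amplitude, which gives a negligible rather than exact error.
\item \emph{Upgrading to exactness.} I would follow the Grier--Morris--Wu / Joshi--Vasconcelos idea of recursing on blocks. Split the qubits into blocks, prepare block-level $W$ states exactly by induction over a constant number of levels, each level using only polylogarithmic-width objects. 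Combine the levels with an exact one-hot selection of which block is occupied, where that selection is itself a smaller $W$ state. With $N^{1/k}$-size blocks and $k$ constant, each level needs fan-out of only bounded or polylogarithmic width.
\end{itemize}

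Cleanliness of the ancillae then follows because every subroutine is uncomputed within its own level.
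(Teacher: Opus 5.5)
Your Steps 1 and 2 are sound. With $p$ chosen so that $Np(1-p)^{N-1}=1/4$ exactly, one iterate $-S_\psi S_1$ sends $R\ket{0^N}$ exactly to $\ket{W_N}$, and $S_\psi$ is ancilla-free and constant depth. The paper itself gives no proof of this fact; it imports it from \cite[Theorem~17]{GrierMorrisWu2026} and \cite[Theorem~1]{JoshiVasconcelos2026}.

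The gap is Step 3, which you correctly identify as the crux but never close. The statement demands exact equality, so $S_1$ must be exact. Your first bullet is, by your own account, only approximate, and any error there carries over into $U_W\ket{0^N}\ket{0^A}$. Your second bullet abandons $S_1$ for a recursive block construction whose key step does not go through as described. Loading a block-level $W$ state into the block chosen by a one-hot selector requires each selector qubit $s_B$ to act on all $N^{1/k}$ qubits of its block, for instance to form $s_B\wedge t_j$ for every position $j$ of a template register. That is polynomial-width fan-out, which is exactly the resource that is unavailable. Your claim that each level needs only bounded or polylogarithmic fan-out is not justified.

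The missing observation is that no weight cutoff is involved at all. The predicate $[\wt(z)\ge2]$ can be computed exactly on \emph{every} input using fan-out of width only $m=\log_2N$. Index positions by $a\in\{0,1\}^m$ and set $O_{r,b}:=\bigvee_{a:\,a_r=b}z_a$ for $r\in[m]$ and $b\in\{0,1\}$.
\begin{itemize}
\item If $\wt(z)\ge2$, two distinct ones differ in some bit $r$, so $O_{r,0}\wedge O_{r,1}=1$.
\item If $\wt(z)\le1$, then for every $r$ at most one of $O_{r,0},O_{r,1}$ equals $1$.
\end{itemize}
Hence $[\wt(z)\ge2]=\bigvee_{r}(O_{r,0}\wedge O_{r,1})$, and $[\wt(z)\ge1]=O_{1,0}\vee O_{1,1}$. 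Each $z_a$ feeds exactly $m$ of the ORs, so \Cref{fact:small-fanout} suffices, just as in \Cref{lem:extractor}.

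Alternatively, apply \Cref{fact:low-weight-thresholds} with $k=2$. It is exact on all inputs, because the small parameter is the threshold $k$, not the input weight; this is unlike the truncated parity $h_T$. Compute $[\wt(z)=1]$ into a flag, apply $Z$, and uncompute. This gives an exact clean $S_1$, and with it your Steps 1--2 become a complete, self-contained proof.
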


\noindent Together with the transversal phase gates described in the introduction, \Cref{fact:W-preparation} gives an exact clean preparation of the phase-encoded state $\ket{W_x}$ while retaining the input register.

\paragraph{3) Low-Weight Boolean Computations.} The repetition argument additionally uses Boolean computations that are efficient only in the low-Hamming-weight regime. Grier--Morris--Wu show that, on an $L$-bit register, the threshold predicate $\mathrm{TH}_{\ge k}$ has an exact polynomial-size constant-depth $\QAC$ implementation whenever $k\le\log^aL$ for a fixed $a$ \cite[Lemma~12]{GrierMorrisWu2026}.

\begin{fact}[Low-weight thresholds]
\label{fact:low-weight-thresholds}
For every fixed $a\ge1$ and $k\le\log^aL$, the predicate $\mathrm{TH}_{\ge k}$ has an exact polynomial-size constant-depth $\QAC$ implementation.
\end{fact}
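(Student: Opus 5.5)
Since this fact is quoted from \cite[Lemma~12]{GrierMorrisWu2026}, the plan is to rebuild an exact construction from the two other primitives: \Cref{fact:small-fanout} and the Høyer--Špalek principle that fan-out yields exact threshold in constant depth \cite{HoyerSpalek2005}. Write $z\in\{0,1\}^L$ and $\wt(z)=|z|$, and let $k\le\log^aL$. The idea is to reduce the global question ``is $\wt(z)\ge k$?'' to polylogarithmically many questions that can be answered on $\polylog L$-width registers, where polylogarithmic fan-out is enough.

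\emph{Step 1 (hashing to small width).} Fix a pairwise-independent family $\mathcal H$ of functions $h:[L]\to[B]$ with $B=ck^2$ and $|\mathcal H|=\operatorname{poly}(L)$. For any set $S$ of size $k$, a random $h\in\mathcal H$ is injective on $S$ with probability at least $1/2$. Hence
\begin{equation*}
\wt(z)\ge k \iff \exists h\in\mathcal H:\ \bigl|\{b\in[B]: \textstyle\bigvee_{i\in h^{-1}(b)} z_i=1\}\bigr|\ge k .
\end{equation*}
The forward direction applies an injective $h$ to a $k$-subset of the support. The reverse direction holds because distinct nonempty buckets contain distinct ones. Each bucket-OR is a single generalized Toffoli into a fresh ancilla.

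\emph{Step 2 (threshold on polylog bits).} For each $h$ we then need $\mathrm{TH}_{\ge k}$ on the $B=\polylog L$ bucket bits. Each bucket bit is fanned out to $\polylog L$ targets using \Cref{fact:small-fanout}. Then the Høyer--Špalek constant-depth exact threshold construction is run on these $B$ qubits. That construction only ever fans out to the register width, so it has polynomial size here. A final OR over $h\in\mathcal H$, followed by uncomputation of all ancillae, gives a clean exact circuit.

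\emph{Main obstacle (input reuse).} Step~1 needs each $z_i$ as a control in $|\mathcal H|=\operatorname{poly}(L)$ different bucket-ORs. Within one layer these gates cannot share the qubit $z_i$, and genuine $\operatorname{poly}(L)$ fan-out of $z_i$ is exactly what is unavailable.

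I would first try to exploit that every use of $z_i$ is a Z-basis control, so all such gates commute. The plan would be to replace ``evaluate every $h$'' by a coherent choice of $h$. Prepare a one-hot superposition over $\mathcal H$ using \Cref{fact:W-preparation}. Route each $z_i$ to the bucket selected by the active $h$ using transversal one-hot-controlled gates, in the style of \Cref{eq:intro-unary-phase-oracle}, so that $z_i$ touches only $O(1)$ gates per layer. Then amplify or OR over the one-hot branches with the existential quantifier handled in the phase.

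If this coherent selection cannot be made exact, the fallback is to restrict $\mathcal H$. I would look for a family in which each coordinate $i$ meets only $\polylog L$ distinct bucket-assignment patterns, for instance via a structured hash such as $h(i)=$ a few bits of $i$ under $\polylog$ many linear maps combined by a product construction. That would let \Cref{fact:small-fanout} supply all the needed copies of $z_i$. Verifying that some such low-reuse family still isolates every $k$-set is the step I expect to require the most care.
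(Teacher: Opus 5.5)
The paper gives no proof of this fact; it imports it from \cite[Lemma~12]{GrierMorrisWu2026}. Your reconstruction therefore has to stand on its own, and as written it does not. The gap is the obstacle you name yourself in Step~1. With a pairwise-independent family of size $\operatorname{poly}(L)$, every $z_i$ must control $|\mathcal H|=\operatorname{poly}(L)$ distinct bucket-ORs. That is $\operatorname{poly}(L)$ fan-out of the input, and neither remedy you sketch removes it.

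The coherent-selection plan fails for two reasons.
\begin{itemize}
\item Routing $z_i$ into bucket $h(i)$ on the branch selecting $h$ still needs either $|\mathcal H|$ gates that all touch $z_i$, or, for every pair $(i,b)$, an OR of the one-hot qubits $\{U_h: h(i)=b\}$. Each $U_h$ appears in $L$ such ORs (one per $i$), so the $\operatorname{poly}(L)$ fan-out has only moved from the input register to the one-hot register.
\item Even with free routing, a superposition over $h$ followed by measurement reports the \emph{fraction} of hash functions isolating the support, not the existential OR over $h$. That gives at best a bounded-error test, whereas the fact asserts an exact implementation. Exact amplitude amplification cannot rescue this, because the fraction depends on $z$.
\end{itemize}
Your fallback points the right way, but it is left as an unverified hope, so the proposal is incomplete at its central step.

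The missing idea is that you never needed a large family. A $k$-perfect family suffices, and such families exist with only $\polylog L$ members.
\begin{itemize}
\item Explicitly, take $h_p(i)=i\bmod p$ for $i\in\{0,\dots,L-1\}$, over the first $\binom{k}{2}\lceil\log_2 L\rceil+1$ primes $p$. If $h_p$ is not injective on a $k$-set $S$, then $p$ divides $\prod_{i<j\in S}|i-j|<L^{\binom{k}{2}}$. That product has at most $\binom{k}{2}\log_2 L$ distinct prime factors, so some $h_p$ isolates $S$. Every prime used is $O(k^2\log L\,\log(k\log L))=\polylog L$.
\item Alternatively, $\lceil k\log_2 L\rceil$ independent uniformly random maps into $[2k^2]$ work, by a union bound over the at most $L^k$ sets $S$.
\end{itemize}
With such a family, each $z_i$ enters exactly one bucket-OR per $h$. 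So \Cref{fact:small-fanout} supplies all $|\mathcal H|=\polylog L$ copies exactly, and the bucket-ORs for all $h$ act on disjoint copies in one layer. Your Step~2 (exact threshold on $\polylog$-width bucket registers using polylogarithmic fan-out), the final OR over $h$, and exact uncomputation then complete the construction. No ``low-reuse pattern'' or product construction is needed, since the reuse count of $z_i$ is simply $|\mathcal H|$.
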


\noindent We use these Boolean computations coherently, computing their outputs into clean ancillae and then uncomputing the workspace. For a cutoff $T$ and input $z\in\{0,1\}^{L}$, define the truncated parity function
\begin{equation}
h_T(z):=
\begin{cases}
\wt(z)\bmod2,&\wt(z)\le T\\
0,&\wt(z)>T
\end{cases}.
\label{eq:truncated-parity}
\end{equation}
Thus $h_T$ agrees with Parity on all strings of Hamming weight at most $T$, while making no attempt to compute Parity beyond that regime. When $T=\polylog N$, $h_T$ is a symmetric $\AC$ function. In particular, it depends only on Hamming weight and is identically zero above the polylogarithmic cutoff $T$. By the exact simulation of symmetric $\AC$ functions in $\QAC$ due to Grier--Morris--Wu \cite[Corollary~16]{GrierMorrisWu2026}, $h_T$ therefore has an exact polynomial-size constant-depth $\QAC$ implementation. This truncated-parity computation is the ingredient used in their phase-state repetition construction \cite[Appendix~C.1]{GrierMorrisWu2026}.

\begin{fact}[Truncated parity]
\label{fact:truncated-parity}
For every fixed $a\ge1$ and $T\le\log^aL$, the function $h_T$ has an exact clean polynomial-size constant-depth $\QAC$ implementation.
\end{fact}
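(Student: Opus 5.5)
The statement is \Cref{fact:truncated-parity}: for $T\le\log^aL$, $h_T$ has an exact clean polynomial-size constant-depth $\QAC$ implementation. I would reduce it to \Cref{fact:low-weight-thresholds} and \Cref{fact:small-fanout} by building a unary (thermometer) encoding of the weight, truncated at $T+1$. The key identity is
\begin{equation}
h_T(z)=\bigoplus_{k=1}^{T}\Bigl(\mathrm{TH}_{\ge k}(z)\wedge\neg\,\mathrm{TH}_{\ge k+1}(z)\wedge [k \text{ odd}]\Bigr)
=\sum_{\substack{k\le T\\ k\text{ odd}}}[\wt(z)=k].
\end{equation}
At most one term is nonzero, so the XOR over $k$ is really an OR of mutually exclusive indicators. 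The output bit can therefore be computed by one generalized Toffoli structure (an OR, via De Morgan, of $\lceil T/2\rceil$ disjoint predicates). Alternatively, it can be computed by $T$ CNOTs into a single target arranged as a depth-$O(1)$ parity of $\polylog$ bits. The second option is also available, since parity of $\polylog L$ bits follows from \Cref{fact:small-fanout} and the standard fan-out/parity equivalence.

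\textbf{Step 1 (copies of $z$).} The $T+1$ threshold circuits all need to read $z$, but $\QAC$ gates have disjoint supports. So I first fan out each bit $z_i$ to $T+1=\polylog L$ fresh ancillae using \Cref{fact:small-fanout}, applied in parallel over $i$. This costs polynomial size and constant depth, and it is exact. The copies live in the computational basis, so they behave classically and can later be uncomputed by running the same fan-out in reverse.

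\textbf{Step 2 (thresholds and output).} On the $k$-th copy, for $k=1,\dots,T+1$, I apply the exact $\mathrm{TH}_{\ge k}$ circuit of \Cref{fact:low-weight-thresholds} into a clean ancilla $b_k$. These circuits run in parallel, with workspace returned to $\ket0$ by the standard compute--copy--uncompute trick if the Fact's circuit is not already clean. Next I compute $e_k=b_k\wedge\neg b_{k+1}$ for odd $k\le T$. This takes Toffolis, and since each $b_k$ is used by at most two $e$'s, I fan out each $b_k$ to two copies first. Then I XOR all $e_k$ into the output qubit using a $\polylog$-width parity, which is again \Cref{fact:small-fanout}. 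Correctness: if $\wt(z)\le T$, exactly one $e_k$ with $k=\wt(z)$ can fire, and it fires iff $\wt(z)$ is odd. If $\wt(z)>T$, every $b_k=1$ including $b_{T+1}$, so all $e_k=0$ and the output is $0$, matching $h_T$.

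\textbf{Step 3 (cleanliness).} I uncompute $e_k$, $b_k$, the threshold workspaces and the copies in reverse order, which leaves $\ket{z}\ket{h_T(z)}\ket{0\cdots0}$. The total depth is the sum of constantly many constant-depth stages, and the size is $\mathrm{poly}(L)$, since $T=\polylog L$ copies each of polynomial size.

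The main obstacle is the cleanliness and exactness bookkeeping rather than the logic. One has to check that each cited Fact gives an exact circuit whose garbage can be uncomputed, and that the parity of the $T$ indicators is exact. For the latter, I would lean on mutual exclusivity and realize it as a single OR, which avoids needing any parity gadget at all. An alternative route is to invoke the symmetric-$\AC$ simulation of Grier--Morris--Wu directly, as the paper indicates; the argument above is essentially an unpacking of that corollary for this specific symmetric function.
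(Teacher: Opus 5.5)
Your argument is correct, but it takes a different route from the paper. The paper does not build $h_T$ by hand. It observes that for $T=\polylog L$ the function $h_T$ is symmetric and vanishes above weight $T$, hence lies in $\AC$, and then invokes the exact simulation of symmetric $\AC$ functions in $\QAC$ \cite[Corollary~16]{GrierMorrisWu2026} as a black box.

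You instead derive the fact from the two primitives the paper already records, \Cref{fact:small-fanout} and \Cref{fact:low-weight-thresholds}:
\begin{itemize}
\item fan $z$ out into $T+1$ copies;
\item compute the thermometer code $b_k=\mathrm{TH}_{\ge k}(z)$ for $k\le T+1$;
\item form the exact-weight indicators $b_k\wedge\neg b_{k+1}$ for odd $k\le T$, which automatically all vanish when $\wt(z)>T$;
\item take one generalized-Toffoli OR of these mutually exclusive bits.
\end{itemize}

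\textbf{What your route buys.} It is self-contained given the recorded facts, and it makes the exactness and cleanliness bookkeeping explicit. Compute--copy--uncompute is valid here because each cited circuit leaves its output qubit exactly in $\ket{f(z)}$ on every basis input. It also never needs the classical fact that polylogarithmic thresholds, and hence $h_T$, lie in $\AC$, nor the general symmetric-function theorem.

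\textbf{What the paper's route buys.} The citation is one line and covers every symmetric $\AC$ function at once, including ones that vary near weight $L$. The cost is an extra external dependency.

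Two cosmetic points. Since only odd $k$ appear, each $b_k$ feeds at most one indicator, so fanning out the $b_k$ is unnecessary. Also, the index $T+1$ may slightly exceed $\log^aL$, so apply \Cref{fact:low-weight-thresholds} with exponent $a+1$.
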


\noindent The low-weight restriction is essential. \Cref{fact:truncated-parity} does not imply a constant-depth $\QAC$ circuit for unrestricted Parity.

\paragraph{4) Phase-State Repetition Construction.} Grier--Morris--Wu combine these ingredients to prepare many effective copies of a phase-encoded $W$ state while using each input bit only once \cite[Appendix~C.1, Claim~29]{GrierMorrisWu2026}. The idea is to begin with $R$ copies of $\ket{W_N}$ and group the desired input-dependent phases according to address occupancy. If $C_a$ denotes the number of copies occupying address $a$, then all occurrences of the input bit $x_a$ contribute the single phase $(-1)^{x_a(C_a\bmod 2)}$. Thus one can compute the parity of each occupancy, apply a single phase gate involving $x_a$, and then uncompute. Replacing the occupancy parity by the truncated-parity computation above gives the following specialization of their result.

\begin{proposition}[Grier--Morris--Wu phase-state repetition, specialized]
\label{prop:phase-repetition}
Fix $b\ge0$. For $N\le R\le N\log^bN$, there is a polynomial-size constant-depth $\QAC$ circuit that, on input $\ket{x}$ and clean ancillas, prepares
$\ket{x}\ket{\rho_x}$ satisfying
\begin{equation}
\bigl|\ket{\rho_x}-\ket{W_x}^{\otimes R}\bigr|_2
\le
\exp[-\Omega(\log^2N\log\log N)]
=
N^{-\omega(1)}.
\label{eq:phase-repetition-error}
\end{equation}
The input is preserved and all ancillary work registers other than the $R$ one-hot registers return to zero.
\end{proposition}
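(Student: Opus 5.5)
The circuit starts from $R$ independent clean copies of the $N$-qubit $W$ state, prepared in parallel by \Cref{fact:W-preparation}. Since the copies act on disjoint registers, this costs only constant depth and polynomial size. Expanding in the computational basis gives
\begin{equation*}
\ket{W_N}^{\otimes R}
=
N^{-R/2}\sum_{u_1,\ldots,u_R}\ket{\oh(u_1)}\cdots\ket{\oh(u_R)},
\end{equation*}
and the target state is the same superposition with each branch weighted by $(-1)^{\sum_j x_{u_j}}$. By \Cref{eq:intro-occupancy-phase}, this phase equals $(-1)^{\sum_a x_a (C_a \bmod 2)}$, where $C_a=\sum_j U^{(j)}_a$ is the occupancy of address $a$.

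For each address $a$, collect the $R$ qubits $U^{(1)}_a,\ldots,U^{(R)}_a$ into an $R$-bit string $z^{(a)}$ of weight $C_a$. The $N$ strings are disjoint, so the following three steps can be run in parallel over $a$:
\begin{enumerate}
\item compute $h_T(z^{(a)})$ into a clean ancilla $b_a$ using \Cref{fact:truncated-parity}, with cutoff $T=\log^2 N$ (this satisfies $T\le\log^a R$ for a suitable fixed $a$, since $R\ge N$);
\item apply $\CZ(X_a,b_a)$;
\item uncompute $b_a$ by running the truncated-parity circuit in reverse.
\end{enumerate}
Each input bit $x_a$ is used in exactly one gate, so no fan-out of the input is needed. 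The total size is $N\cdot\operatorname{poly}(R)=\operatorname{poly}(N)$, and the depth is constant. Because \Cref{fact:truncated-parity} is exact and clean, on every basis branch the circuit multiplies the amplitude by $(-1)^{\sum_a x_a h_T(C_a)}$ and returns all workspace to zero. The input register is left untouched.

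For the error analysis, $\ket{\rho_x}$ agrees with $\ket{W_x}^{\otimes R}$ on every branch where $C_a\le T$ for all $a$. On the remaining branches the two amplitudes have equal magnitude and at most opposite sign. Therefore
\begin{equation*}
\bigl\|\ket{\rho_x}-\ket{W_x}^{\otimes R}\bigr\|_2^2
\le
4\Pr[\exists a:\ C_a>T],
\end{equation*}
where the probability is over $R$ independent uniform addresses, i.e.\ the squared amplitudes of $\ket{W_N}^{\otimes R}$. Each $C_a\sim\mathrm{Bin}(R,1/N)$ has mean $\mu=R/N\le\log^bN$. A standard Chernoff bound, or directly
\begin{equation*}
\Pr[C_a>T]\le\binom{R}{T}N^{-T}\le\Bigl(\frac{e\mu}{T}\Bigr)^T,
\end{equation*}
gives at most $\exp[-\Omega(T\log(T/\mu))]$. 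Taking $T=\log^{b+2}N$, so that $T/\mu\ge\log^2N$, this is $\exp[-\Omega(\log^{b+2}N\log\log N)]$. A union bound over the $N$ addresses and a square root then give the stated $\exp[-\Omega(\log^2N\log\log N)]$ bound, which is $N^{-\omega(1)}$. Note that $T$ must grow with $b$: a fixed $\log^2N$ cutoff would fail when $\mu$ is polylogarithmic. With $T=\log^{b+2}N$ still polylogarithmic, \Cref{fact:truncated-parity} remains applicable.

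The main obstacle is the second step, namely ensuring that the truncated-parity evaluation is exact and clean, so that uncomputation leaves no residual entanglement in the workspace and the only deviation from the ideal state comes from the high-occupancy branches. This is supplied by the exactness in \Cref{fact:truncated-parity}. The remaining care is in choosing $T$ relative to $\mu$ so that the tail bound beats the $N$-fold union bound by a superpolynomial margin.
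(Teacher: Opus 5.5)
Your proposal is correct and follows the same route as the paper's sketch, which specializes Grier--Morris--Wu's Claim~29: prepare $R$ clean copies of $\ket{W_N}$, consolidate each input bit's phases through the occupancy parity computed by exact clean truncated parity, apply one $\CZ$ per input bit, uncompute, and bound the error by a balls-into-bins tail on the high-occupancy branches. The one thing to tidy is that your step~1 still names the cutoff $T=\log^2N$; as you observe later, it should be $T=\log^{b+2}N$ throughout, since that choice makes $(e\mu/T)^T$ beat the $N$-fold union bound and yields the stated $\exp[-\Omega(\log^2N\log\log N)]$ error.
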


\noindent The approximation arises only from truncating the occupancy parities. When $R\le N\log^bN$, a balls-into-bins bound shows that any occupancy exceeds the required polylogarithmic cutoff only on branches of total squared amplitude $\exp[-\Omega(\log^2N\log\log N)].$
On every other branch, truncated parity agrees with the true occupancy parity, which gives the approximation guarantee in \Cref{prop:phase-repetition} uniformly in $x$.

\subsection{Forrelation Classical Circuit Lower Bounds}
\label{sec:classical-bounds}

The classical side of our separation follows from existing lower bounds for
constant-gap Forrelation. Recall that $\ConstForr$ distinguishes inputs with
$\Forr_N(x,y)\ge\delta_0$ from those with
$|\Forr_N(x,y)|\le\delta_0/2$, for the fixed constant
$\delta_0=2^{-10}$.

Bansal--Sinha analyze two distributions on the $2N$ input bits
\cite[Theorems~3.1, 3.2, and~3.4]{BansalSinha2021}. The first, $p_0$, is
uniform and produces a NO instance with probability $1-O(1/N)$, while the
second, $p_1$, is a correlated Forrelation distribution that produces a YES
instance with constant probability. For 2-fold Forrelation, their
Fourier-analytic bound controls the distinguishing advantage between $p_1$
and $p_0$ by $N^{-1/2}$ times the level-two Fourier weight of the
distinguisher under biased product measures. Their restriction argument
\cite[Theorem~3.4]{BansalSinha2021} reduces the required biased-measure
bound to a uniform-measure Fourier bound for restrictions of the
distinguisher. Tal's Fourier-growth theorem gives precisely such a bound. Concretely,
for a depth-$d$, size-$S$ $\AC$ circuit, the level-$\ell$ Fourier
$L_1$-weight is at most $O\!\left(\log^{d-1} S\right)^\ell$ \cite[Theorem~2]{Tal2017}. Since restrictions of an $\AC$ circuit remain
$\AC$ circuits of no greater size or depth, taking $\ell=2$ yields
distinguishing advantage at most
\[
    O_d\!\left(
        \frac{(\log S)^{2(d-1)}}{\sqrt N}
    \right).
\]
Thus, distinguishing $p_1$ from $p_0$ with constant advantage requires
$S=\exp(N^{\Omega_d(1)})$. This constant-gap randomized-$\AC$ lower bound
is also summarized by Girish--Servedio
\cite[Section~1.2, Table~1]{GirishServedio2026}.

To pass from this distributional statement to a lower bound for
bounded-error promise circuits, first amplify any putative solver to
sufficiently small constant error using a constant number of independent
repetitions. This preserves constant depth and increases the size by only
a constant factor. Since $p_1$ places constant probability on YES
instances while $p_0$ produces a NO instance with probability
$1-O(1/N)$, the amplified solver would distinguish $p_1$ from $p_0$ with
constant advantage.

\begin{fact}[Constant-gap Forrelation hardness]
\label{fact:classical-forrelation-hardness}
For every fixed depth $d$, any bounded-error randomized $\AC$ circuit
solving $\ConstForr$ has size $\exp(N^{\Omega_d(1)})$.
In particular, quasipolynomial-size constant-depth $\AC$ circuits do not
suffice.
\end{fact}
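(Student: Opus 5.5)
The plan is to reduce the bounded-error promise statement to a distinguishing statement between two explicit input distributions, and then bound the distinguishing advantage of any small $\AC$ circuit using Fourier growth.

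\textbf{Step 1: reduce to deterministic distinguishers.} Suppose a depth-$d$, size-$S$ randomized $\AC$ circuit $C(x,y;r)$ solves $\ConstForr$ with completeness $2/3$ and soundness $1/3$. Take the majority of a constant number $k$ of independent copies. Majority on $k=O(1)$ bits is a constant-size $\AC$ gadget, so this yields error at most some small constant $\eta$ (say $\eta=10^{-3}$) on every promised input. Depth grows by $O(1)$ and size by a factor $O(1)$.

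\textbf{Step 2: choose the distributions.} Let $p_0$ be uniform on $\{\pm1\}^{2N}$. Since $\mathbb{E}_{p_0}[\Forr_N]=0$ and $\mathrm{Var}_{p_0}[\Forr_N]=1/N$, Chebyshev gives $\Pr_{p_0}[|\Forr_N|>\delta_0/2]\le 4/(\delta_0^2N)=O(1/N)$. So $p_0$ outputs a NO instance with probability $1-O(1/N)$.

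Let $p_1$ be the Bansal--Sinha correlated distribution. Sample a Gaussian $g\sim\mathcal N(0,\epsilon I_N)$, set $h=H_Ng$, and output $(\mathrm{sgn}(g),\mathrm{sgn}(h))$, with $\epsilon$ a small constant. I would verify, via \cite[Theorems~3.1,~3.2]{BansalSinha2021}, that $\mathbb{E}_{p_1}[\Forr_N]=\Theta(\epsilon)$ and that $\Forr_N$ concentrates. Then for a suitable constant $\epsilon$ we get $\Pr_{p_1}[\Forr_N\ge\delta_0]\ge c$ for a constant $c>0$. Here $\delta_0=2^{-10}$ is small enough that this is not delicate.

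With these two facts, the amplified circuit satisfies
\[
\mathbb{E}_{p_1}\mathbb{E}_r C-\mathbb{E}_{p_0}\mathbb{E}_r C\ \ge\ c(1-\eta)-\eta-O(1/N)-(1-c)\ \ge\ c/2 .
\]
Rather than chase exact constants, I would argue this more carefully by conditioning on YES inputs under $p_1$, using $\mathbb{E}_{p_1}C\ge c(1-\eta)$ against $\mathbb{E}_{p_0}C\le\eta+O(1/N)$. This yields advantage at least $c/2$ once $\eta<c/4$. By averaging over $r$, some fixed seed gives a deterministic $\AC$ circuit $f$ with the same advantage.

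\textbf{Step 3: bound the advantage via Fourier growth.} Invoke the Bansal--Sinha bound \cite[Theorem~3.4]{BansalSinha2021}. For 2-fold Forrelation, the advantage $|\mathbb{E}_{p_1}f-\mathbb{E}_{p_0}f|$ is at most $O(N^{-1/2})$ times the maximum, over restrictions $\rho$, of the level-2 Fourier $L_1$-weight of $f|_\rho$ under the uniform measure. Higher-level contributions are handled by their restriction and truncation argument.

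Restrictions of $f$ are again depth-$d$, size-$\le S$ circuits. Tal's theorem \cite[Theorem~2]{Tal2017} then bounds this weight by $O(\log^{d-1}S)^2$. Combining, the advantage is at most $O_d\bigl((\log S)^{2(d-1)}/\sqrt N\bigr)$. Setting this $\ge c/2$ forces $\log S\ge N^{1/(4(d-1))-o(1)}$, that is, $S=\exp(N^{\Omega_d(1)})$. This rules out quasipolynomial size.

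\textbf{Main obstacle.} The step I expect to need the most care is Step 2's YES-probability claim for $p_1$. The distribution is a rounded Gaussian, so $\mathbb{E}_{p_1}[\Forr_N]$ must be computed through the arcsine formula $\mathbb{E}[\mathrm{sgn}(g_u)\mathrm{sgn}(h_v)]=\tfrac{2}{\pi}\arcsin(\cdot)$. The variance also has to be controlled, so that a constant fraction of mass lands above the fixed threshold $\delta_0$ for the chosen $\epsilon$.

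I would first try to cite these estimates directly from \cite{BansalSinha2021}, or the summary in \cite{GirishServedio2026}. Failing that, I would compute the second moment of $\Forr_N$ under $p_1$ using Gaussian correlation identities, and apply Paley--Zygmund.
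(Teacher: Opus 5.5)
Your outline is the paper's argument: amplify to small constant error, compare uniform $p_0$ with a Bansal--Sinha correlated $p_1$, fix the seed, reduce the advantage to level-two Fourier weight of restrictions, and apply Tal's bound. The Chebyshev bound for $p_0$ and the seed-fixing are fine.

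The gap is in your explicit definition of $p_1$, which is not the distribution your Step~3 citation covers. Outputting $(\mathrm{sgn}(g),\mathrm{sgn}(h))$ is the Aaronson--Ambainis rounding. Since sign is scale-invariant, $\epsilon$ plays no role there. The arcsine formula gives $\mathbb{E}[\Forr_N]=\sqrt N\cdot\tfrac{2}{\pi}\arcsin(N^{-1/2})\to 2/\pi$, not $\Theta(\epsilon)$, and this inconsistency is the symptom. Bansal--Sinha's bound is proved for a different distribution: $Z=(g,H_Ng)$ is truncated coordinatewise to $[-1,1]$, and each coordinate is then rounded independently to $\pm1$ with mean $\mathrm{trnc}(Z_i)$. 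For that distribution $\mathbb{E}_{p_1}f=\mathbb{E}[\tilde f(\mathrm{trnc}(Z))]$, with $\tilde f$ the multilinear extension. Their interpolation argument, the biased-product-measure Fourier weights, and the passage to restrictions all rest on this identity. Under deterministic sign rounding, $f$ is only ever evaluated at vertices of the cube, and the cited bound does not apply. As written, your Steps~2 and~3 concern different distributions.

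The fix is to replace sign rounding by the truncate-then-randomly-round distribution. Then $\mathbb{E}[x_uy_v]=\mathbb{E}[\mathrm{trnc}(g_u)\,\mathrm{trnc}(h_v)]\approx\epsilon(-1)^{u\cdot v}/\sqrt N$. Hence $\mathbb{E}_{p_1}[\Forr_N]=\Theta(\epsilon)$ for a small constant $\epsilon$, which is the estimate you actually wanted, and the arcsine computation is irrelevant.

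Your stated main obstacle also disappears. $H_N$ is orthogonal, so $|\Forr_N|\le 1$ on every input, and therefore $\Pr_{p_1}[\Forr_N\ge\delta_0]\ge\mathbb{E}_{p_1}[\Forr_N]-\delta_0$. This is a positive constant once $\mathbb{E}_{p_1}[\Forr_N]\ge 2\delta_0$, with no second-moment or Paley--Zygmund argument needed. The genuinely delicate point at constant gap is the advantage bound at constant $\epsilon$, where truncation is no longer negligible. That is exactly what Bansal--Sinha supply, and what both you and the paper import.
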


\noindent
Note that the distributions are used only to establish the lower bound. The
conclusion itself is pointwise and rules out bounded-error circuits that
are correct on every input satisfying the promise.
\section{Main Results}
\label{sec:main-results}

We first state our quantum upper bound formally.

\begin{theorem}[Quantum upper bound]
\label{thm:main-upper}
Fix $c\ge0$. If $\delta(N)\ge\log^{-c}N$ for all sufficiently large
$N$, then $\GapForr_{N,\delta}\in\PromiseBQAC$. The circuit receives
the explicit $2N$-bit input $(x,y)$ and has depth $O_c(1)$, size
$N^{O_c(1)}$, and polynomial total width in $N$. 
\end{theorem}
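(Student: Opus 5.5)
The plan is to build a single ``one-hot'' Forrelation test whose acceptance probability is exactly $\Forr_N(x,y)^2/N$, run $R=\Theta(N/\delta^2)$ such tests in parallel, and accept if any test accepts. The $R$ copies of the phase states are supplied by \Cref{prop:phase-repetition}, so no input bit ever needs unbounded fan-out.

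\emph{Step 1: a single test.} Prepare $\ket{W_N}$ on two $N$-qubit registers $A,B$ using \Cref{fact:W-preparation}. Apply the transversal gates $\CZ(X_a,A_a)$ and $\CZ(Y_b,B_b)$, which gives
\[
\ket{W_x}\ket{W_y}=\frac1N\sum_{u,v}(-1)^{x_u+y_v}\ket{\oh(u)}\ket{\oh(v)}.
\]
Next I need the phase $(-1)^{u\cdot v}$ on each branch. Because exactly one qubit of $A$ is set, the $i$th address bit is a single OR,
\[
u_i=\bigvee_{a:\,a_i=1}A_a ,
\]
which is one generalized Toffoli. Each $A_a$ appears in only $m=\log N$ of these ORs. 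I therefore first fan out every $A_a$ (and every $B_b$) to $m$ copies using \Cref{fact:small-fanout}, and then compute $u_1,\dots,u_m$ and $v_1,\dots,v_m$ into fresh ancillae. I apply $\CZ(u_i,v_i)$ for every $i$, and then uncompute the ORs and the fan-outs.

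This implements the diagonal map $\ket{\oh(u)}\ket{\oh(v)}\mapsto(-1)^{u\cdot v}\ket{\oh(u)}\ket{\oh(v)}$ exactly. After that I apply $U_W^\dagger$ to both registers and accept iff all $2N$ qubits are $0$, which is one generalized Toffoli on negated controls. The accepting amplitude is
\[
\bra{W_N}\bra{W_N}\,\Big(\sum_{u,v}(-1)^{u\cdot v}\ket{\oh(u),\oh(v)}\!\bra{\oh(u),\oh(v)}\Big)\ket{W_x}\ket{W_y}
=\frac{1}{N^2}\sum_{u,v}s_ut_v(-1)^{u\cdot v}=\frac{\Forr_N(x,y)}{\sqrt N},
\]
so the test accepts with probability $q(x,y)=\Forr_N(x,y)^2/N$.

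\emph{Step 2: amplification.} Set $R=\lceil \kappa N/\delta^2\rceil$ with the constant $\kappa=1.2$. Since $\delta\ge\log^{-c}N$ and $\delta\le 1$, we have $N\le R\le N\log^{2c+1}N$, so \Cref{prop:phase-repetition} applies with $b=2c+1$ separately to $x$ and to $y$. It yields states $\ket{\rho_x},\ket{\rho_y}$ that are $N^{-\omega(1)}$-close to $\ket{W_x}^{\otimes R}$ and $\ket{W_y}^{\otimes R}$.

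On these $R$ register pairs I run the input-free remainder of Step 1 in parallel: the $u\cdot v$ phase, $U_W^\dagger$, and the zero test. I write each test's accept bit to its own ancilla and then OR the $R$ accept bits with a single generalized Toffoli.

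For the ideal product state, the overall acceptance probability is $1-(1-q)^R$.
\begin{itemize}
\item On YES instances, $q\ge\delta^2/N$, so the acceptance probability is at least $1-e^{-\kappa}>0.69$.
\item On NO instances, $q\le\delta^2/(4N)$, so the acceptance probability is at most $Rq\le\kappa/4+o(1)<0.31$.
\end{itemize}
The approximation error from \Cref{prop:phase-repetition} shifts these probabilities by at most $N^{-\omega(1)}$ in total. This gives completeness $2/3$ and soundness $1/3$. Depth is $O_c(1)$ and size is $\operatorname{poly}(N,R)=N^{O(1)}$.

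\emph{Main obstacle.} The critical point is that every input bit must be touched only a bounded number of times, while still producing $R\gg N$ effective test copies. This is exactly what \Cref{prop:phase-repetition} provides, so I expect the bulk of the care to go into two checks. First, the proposition's hypotheses ($N\le R\le N\log^bN$ with constant $b$) must hold uniformly in the promised gap regime. Second, Step 1's binary-address extraction must use only $\log N$-fold fan-out of ancilla qubits, never of input bits, so that it applies identically to every one-hot register produced by the repetition construction.
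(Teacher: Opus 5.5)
Your proposal is correct and follows the paper's proof essentially step for step: one-hot label extraction via $\log N$-fold fan-out, $\CZ$ on the label bits, $U_W^\dagger$ followed by a zero test giving acceptance probability $\Forr_N(x,y)^2/N$, and then $R=\Theta(N/\delta^2)$ effective copies from \Cref{prop:phase-repetition} combined by an OR of the flags (you take $\kappa=1.2$ where the paper takes $5/4$). One small fix is needed: the final zero test must also include the preparation ancillae used by $U_W^\dagger$, as the paper's NOR does, because $U_W$ is only specified on $\ket{0^N}\ket{0^A}$ and testing just the $2N$ register qubits could accept with probability larger than $\Forr_N(x,y)^2/N$, which would invalidate your soundness bound.
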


\noindent For comparison, the same explicit-input promise is easy for
unrestricted threshold circuits.

\begin{proposition}[Elementary threshold upper bound]
\label{prop:TC-upper}
For every gap function $\delta(N)\in(0,1]$, the promise
$\GapForr_{N,\delta}$ has deterministic constant-depth threshold
circuits of size $O(N^2)$. In particular,

$$
    \GapForr_{N,\delta}\in\PromiseTC.
$$

\end{proposition}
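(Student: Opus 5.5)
We prove \Cref{prop:TC-upper} by computing the bilinear form $s^{\mathsf T}H_N t$ exactly with a depth-two (or constant-depth) threshold circuit and then comparing it against the thresholds dictated by the promise. Write $s_u=(-1)^{x_u}$ and $t_v=(-1)^{y_v}$. Then
\[
N^{3/2}\Forr_N(x,y)=\sum_{u,v}(-1)^{x_u\oplus y_v\oplus (u\cdot v)},
\]
and $u\cdot v\bmod 2$ is a fixed constant for each pair $(u,v)$. Hence the sum equals $N^2-2Z$, where
\[
Z=\#\{(u,v): x_u\oplus y_v \ne u\cdot v \bmod 2\}.
\]

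\textbf{Bottom layer.} For each of the $N^2$ pairs $(u,v)$, we compute the bit $z_{u,v}=x_u\oplus y_v\oplus(u\cdot v \bmod 2)$. This is either $x_u\oplus y_v$ or its negation. Each two-bit XOR is a constant-size AND/OR/NOT (or threshold) gadget, so this layer uses $O(N^2)$ gates of constant depth. It relies on unbounded classical fan-out, which is available: each $x_u$ feeds $N$ gadgets.

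\textbf{Threshold layers.} Since $\Forr_N=(N^2-2Z)/N^{3/2}$, we have
\[
\Forr_N\ge\delta \iff Z\le \tfrac12\bigl(N^2-\delta N^{3/2}\bigr).
\]
This is a single threshold gate on the $N^2$ bits $z_{u,v}$, namely $\mathrm{TH}_{\le \lfloor (N^2-\delta N^{3/2})/2\rfloor}$, realized as a negated $\mathrm{TH}_{\ge k}$. On YES instances it outputs $1$. On NO instances we have $\Forr_N\le\delta/2<\delta$ (we use $\delta>0$), so it outputs $0$. No second gate for the absolute value is needed, because accepting exactly when $\Forr_N\ge\delta$ is correct on every promised input and the circuit may behave arbitrarily off the promise.

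\textbf{Size and uniformity.} The circuit has $O(N^2)$ gates and wires per gate bounded by $N^2$, and its depth is a constant (about three). It is deterministic, which gives membership in $\PromiseTC$ for any $\delta(N)\in(0,1]$. The only mild subtlety is that the threshold value depends on $\delta(N)$ and may be irrational. We handle this by rounding the integer threshold $\lfloor (N^2-\delta N^{3/2})/2\rfloor$, which is legitimate since $Z$ is an integer. There is no real obstacle; the main step is simply observing that the Walsh--Hadamard signs are hardwired, so the bilinear form reduces to counting $N^2$ XORs.
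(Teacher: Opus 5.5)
Your proof is correct and follows essentially the same approach as the paper: hardwire the Walsh--Hadamard signs $u\cdot v$, compute the $N^2$ per-pair sign indicators in parallel with constant-size gates, and decide with a single threshold gate. The only cosmetic difference is that you count negative terms $Z$ where the paper counts positive terms $B$. You also spell out the integer threshold $\lfloor (N^2-\delta N^{3/2})/2\rfloor$ and note why one-sided thresholding suffices, both of which the paper leaves implicit.
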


\begin{proof}
Recall that the coordinates of $x$ and $y$ are indexed by
$u,v\in\{0,1\}^m$. For each pair of addresses $(u,v)$, the
corresponding term in the Forrelation sum is
$(-1)^{x_u+y_v+u\cdot v}$. Let
$I_{u,v}:=\mathbf1[x_u\oplus y_v=u\cdot v]$ indicate that this term is
positive, and let $B:=\sum_{u,v}I_{u,v}$ be the number of positive
terms. Since the remaining $N^2-B$ terms are negative,
\begin{equation}
\Forr_N(x,y)
=
\frac{B-(N^2-B)}{N^{3/2}}
=
\frac{2B-N^2}{N^{3/2}}.
\end{equation}
Each $I_{u,v}$ depends only on $x_u$, $y_v$, and the hardwired value
of $u\cdot v$, so all $N^2$ indicators can be computed in parallel by
constant-size gates. A single threshold gate then distinguishes the
two promised ranges.
\end{proof}

\noindent Combining \Cref{thm:main-upper} and \Cref{prop:TC-upper} with the known classical
lower bound in \Cref{fact:classical-forrelation-hardness} yields the
following separation.

\begin{corollary}[Explicit-input separation]
\label{cor:separation}
The constant-gap promise family $\ConstForr$ satisfies
\begin{equation}
\ConstForr
\in
(\PromiseBQAC\cap\PromiseTC)\setminus\PromiseAC.
\label{eq:separation}
\end{equation}
Moreover, for every fixed depth $d$, bounded-error randomized $\AC$
circuits solving $\ConstForr$ require size
$\exp(N^{\Omega_d(1)})$.
\end{corollary}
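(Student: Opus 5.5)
The corollary is a direct assembly of the three preceding results, so I would prove it by checking that each one applies to the constant-gap family $\ConstForr=\{\GapForr_{N,\delta_0}\}$ with $\delta_0=2^{-10}$.

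\textbf{Step 1 (quantum membership).} I would apply \Cref{thm:main-upper} with $c=0$. The gap is the constant $\delta_0=2^{-10}$, and for all sufficiently large $N$ we have $\delta_0\ge \log^{-1}N$. So the hypothesis $\delta(N)\ge\log^{-c}N$ already holds with $c=1$ (and trivially with $c=0$ if one reads $\log^{0}N=1$ loosely). The theorem then gives circuits of depth $O(1)$ and size $N^{O(1)}$ that act on the explicit input and have completeness $2/3$ and soundness $1/3$. The finitely many small $N$ where the hypothesis might fail can be hardwired into constant-size circuits. This gives $\ConstForr\in\PromiseBQAC$.

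\textbf{Step 2 (threshold membership).} \Cref{prop:TC-upper} holds for every gap function, so it holds in particular for $\delta\equiv\delta_0$. It gives deterministic $O(N^2)$-size constant-depth threshold circuits, and hence $\ConstForr\in\PromiseTC$. Together with Step 1, this places $\ConstForr$ in the intersection.

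\textbf{Step 3 (classical hardness).} \Cref{fact:classical-forrelation-hardness} states that any bounded-error randomized depth-$d$ $\AC$ circuit solving $\ConstForr$ has size $\exp(N^{\Omega_d(1)})$. This is the "moreover" clause. For the non-membership claim, I would argue by contradiction. A deterministic family in $\PromiseAC$ has polynomial size and some fixed depth $d$. It is in particular a bounded-error randomized circuit, since it makes zero error on the promise. Its size $N^{O(1)}$ is eventually smaller than $\exp(N^{\Omega_d(1)})$, which contradicts the fact, so $\ConstForr\notin\PromiseAC$.

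The only point needing care is aligning conventions. The input length is $2N$ rather than $N$, but this changes polynomial and exponential bounds only by constant factors. The hardness statement is pointwise on the promise, which matches the promise-class definition. I expect no real obstacle, since all the substance lies in \Cref{thm:main-upper}.
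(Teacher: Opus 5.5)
Your proposal is correct and follows the paper's proof, which simply invokes \Cref{thm:main-upper} at $\delta=\delta_0$, \Cref{prop:TC-upper} for the $\PromiseTC$ containment, and \Cref{fact:classical-forrelation-hardness} for both the exclusion and the size lower bound. One small correction: $c=0$ does not work, since $\log^{0}N=1>\delta_0$, but your choice $c=1$ is valid once $\log N\ge 2^{10}$ and is the right one.
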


\begin{proof}
Apply \Cref{thm:main-upper} at $\delta=\delta_0$. The $\PromiseTC$
containment follows from \Cref{prop:TC-upper}, while the classical
exclusion and size lower bound are given by
\Cref{fact:classical-forrelation-hardness}. All three statements use
the same explicit input of length $2N$.
\end{proof}

\section{Proof of the $\QAC$ Forrelation circuit}
\label{sec:upper-bound}

We now prove \Cref{thm:main-upper}. The construction has two main steps. First, starting from the phase-encoded $W$ states $\ket{W_x}$ and $\ket{W_y}$, we reproduce the Walsh--Hadamard interference pattern directly on their one-hot address registers. This gives a single test with acceptance probability $\Forr_N(x,y)^2/N$. We then use the phase-state repetition construction of \Cref{prop:phase-repetition} to realize sufficiently many copies of this test in constant depth from a single copy of each input string.

\subsection{A Single One-Hot Forrelation Test}
\label{sec:one-hot-test}

The phase-encoded states $\ket{W_x}$ and $\ket{W_y}$ already supply the input-dependent signs $(-1)^{x_u+y_v}$ appearing in the Forrelation sum. It remains to introduce the Walsh--Hadamard sign $(-1)^{u\cdot v}$. Earlier, we discussed the difficulty of converting a binary address into its one-hot representation. Here we need a simpler operation in the reverse direction. Starting from a one-hot address $\ket{\oh(u)}$, we coherently extract its $m$-bit binary label $u$. Importantly, we only append the binary label and do not erase the one-hot register.

\begin{lemma}[One-hot-to-binary label extraction]
\label{lem:extractor}
There is an exact clean polynomial-size constant-depth $\QAC$ unitary
$E_N$ such that, for every $u,z\in\{0,1\}^m$,
\begin{equation}
E_N:
\quad
\ket{\oh(u)}_U\ket{z}_B
\longmapsto
\ket{\oh(u)}_U\ket{z\oplus u}_B.
\label{eq:extractor}
\end{equation}
\end{lemma}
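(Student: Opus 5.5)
The map we want is a classical reversible map on basis states: target bit $j$ must be flipped by the value $u_j=\bigoplus_{a:\,a_j=1}U_a$. The natural idea is to write each label bit as a parity of one-hot qubits and then use the promise that exactly one of them is $1$. Define $S_j:=\{a\in\{0,1\}^m : a_j=1\}$. On one-hot inputs, at most one qubit of $S_j$ is set, so
\[
u_j=\bigoplus_{a\in S_j}U_a=\mathrm{OR}_{a\in S_j}U_a .
\]
An OR is computable in $\QAC$ with a single generalized Toffoli: conjugate its controls by $X$ gates and negate the target. The real issue is therefore fan-out of the one-hot qubits. Each $U_a$ has to feed into up to $m=\log N$ different OR gates (one for each $j$ with $a_j=1$), and gates within a layer must have disjoint supports.

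I would handle this as follows.

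First, use \Cref{fact:small-fanout} to fan out each $U_a$ into $m$ clean ancilla copies $U_a^{(1)},\dots,U_a^{(m)}$. This is width $\log N$ per wire, and all $N$ fan-outs act on disjoint supports, so they run in parallel in constant depth with polynomial size.

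Second, for each $j\in[m]$ in parallel, apply a generalized Toffoli onto $B_j$ computing $\mathrm{OR}_{a\in S_j}U_a^{(j)}$. Concretely, apply $X$ to each control, then a multi-controlled NOT onto $B_j$, then $X$ on $B_j$ and $X$ on the controls again. This XORs $\mathrm{OR}=u_j$ into $z_j$. The gates for different $j$ use disjoint copy registers and disjoint targets, so they fit in one layer.

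Third, undo the fan-out. Fan-out is XOR-based and self-inverse on clean ancillas, so reapplying it returns the copies to $\ket{0}$. Since every step is a basis-state permutation that keeps $U$ intact, the net map is exactly \eqref{eq:extractor}, and it is clean.

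Only one-hot inputs $U$ matter for the statement. On other inputs the circuit is still some unitary, which is all we need.

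The main obstacle is making sure the fan-out step is exact and clean rather than approximate. \Cref{fact:small-fanout} guarantees exactness for polylogarithmic width, and $m=\log N$ falls in that range, so I expect this step to be routine. If one preferred to avoid fan-out altogether, an alternative is to compute each $u_j$ via a parity over $S_j$. That reintroduces the Parity obstacle, which is exactly why I would use the OR-under-one-hot observation instead.
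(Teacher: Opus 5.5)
Your proposal is correct and matches the paper's proof essentially step for step. You use the one-hot promise to write each label bit as $u_j=\bigvee_{a\in S_j}U_a$, fan out each one-hot qubit to $m=\log_2 N$ copies via \Cref{fact:small-fanout}, compute the $m$ ORs in parallel on disjoint copies into the label qubits, and then reverse the fan-out, with your explicit NOR-plus-$X$ realization of each OR being a detail the paper leaves implicit.
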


\begin{proof}
For each address bit $r\in[m]$, let
$S_r:=\{v\in\{0,1\}^m:v_r=1\}$ be the set of addresses whose $r$th bit is one. Since $\ket{\oh(u)}$ has its unique nonzero entry at the coordinate indexed by $u$,
\begin{equation}
u_r
=
\bigvee_{v\in S_r}\oh(u)_v.
\label{eq:label-bit-OR}
\end{equation}
To compute all $m$ bits of $u$ in parallel, each one-hot qubit may need to participate in up to $m=\log_2N$ of these OR computations. By \Cref{fact:small-fanout}, exact fan-out to this many targets is available in polynomial-size constant depth. Use it to distribute each one-hot qubit to the required work registers. For each $r$, reversibly compute the OR in \Cref{eq:label-bit-OR} into the label qubit $B_r$. The $m$ OR computations act on disjoint copies and therefore run in parallel. Finally, reverse the fan-out to return all copied ancillae to zero.
\end{proof}

The distinction from the binary-to-one-hot conversion discussed earlier is important. \Cref{lem:extractor} does not implement the clean conversion
$\ket{\oh(u)}\ket{0^m}\mapsto\ket{0^N}\ket{u}$. Rather, it only appends the binary label while retaining the one-hot register. For our purposes, however, these labels are enough to implement the remaining Forrelation phase.

\begin{lemma}[One-hot Walsh--Hadamard kernel]
\label{lem:kernel}
On two one-hot registers, the unitary
\begin{equation}
K_N\ket{\oh(u),\oh(v)}
=
(-1)^{u\cdot v}\ket{\oh(u),\oh(v)}
\label{eq:kernel}
\end{equation}
has an exact clean polynomial-size constant-depth $\QAC$ implementation.
\end{lemma}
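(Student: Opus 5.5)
The plan is to compute, extract, phase, and uncompute. First apply \Cref{lem:extractor} to each one-hot register, with fresh $m$-qubit label registers $B$ and $B'$ initialized to $\ket{0^m}$. This gives
\[
\ket{\oh(u),\oh(v)}\ket{0^m}_B\ket{0^m}_{B'}\longmapsto\ket{\oh(u),\oh(v)}\ket{u}_B\ket{v}_{B'}.
\]
The two applications act on disjoint registers, so they run in parallel in constant depth.

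Next, apply the phase $(-1)^{u\cdot v}=\prod_{r=1}^m(-1)^{u_rv_r}$ using the $m$ disjoint gates $\CZ(B_r,B'_r)$. These gates act on pairwise disjoint qubit pairs, so they form a single layer. Because each branch carries definite labels $u,v$, this multiplies every basis branch $\ket{\oh(u),\oh(v)}$ by exactly $(-1)^{u\cdot v}$.

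Finally, apply $E_N^{-1}$ (equivalently $E_N$ again, since XORing $u$ twice returns $z$) on each side. This returns $B$ and $B'$ to $\ket{0^m}$, and all ancillae internal to $E_N$ are already clean by \Cref{lem:extractor}. The resulting circuit is diagonal on the one-hot basis with the required signs, so $K_N$ is exact and clean.

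The depth is twice that of $E_N$ plus one, and the size is $O(\mathrm{size}(E_N)+m)$, which is polynomial. The only point requiring care is that the statement specifies $K_N$ only on one-hot inputs. This is all the later argument needs, since the registers are always in superpositions of one-hot strings coming from $W$ states. Off that subspace the circuit is still a unitary, and the identity (\ref{eq:kernel}) is required only on its span. No step is a genuine obstacle. The nontrivial ingredient, the polylogarithmic fan-out needed to compute each of the $m$ label bits from a shared one-hot qubit, is already contained in \Cref{lem:extractor}.
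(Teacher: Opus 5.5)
Your proposal is correct and follows essentially the same route as the paper's proof: extract both binary labels with $E_N$ (\Cref{lem:extractor}), apply the $m$ disjoint $\CZ$ gates between corresponding label qubits to produce $(-1)^{u\cdot v}$, and then uncompute the labels with $E_N^{\dagger}$. Your added remark that the identity only needs to hold on the span of one-hot basis states is correct, and it is a point the paper leaves implicit.
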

\begin{proof}
First apply $E_N$ separately to the two one-hot registers to append the binary labels $u$ and $v$. Next, apply controlled-$Z$ gates between the corresponding label qubits. Since the $r$th gate contributes the phase $(-1)^{u_rv_r}$, the product of the $m$ gates contributes $\prod_{r=1}^m (-1)^{u_rv_r}=(-1)^{u\cdot v}$, as desired.
Finally, apply $E_N^\dagger$ to both registers to uncompute the binary labels, leaving only the desired phase on the original one-hot registers.
\end{proof}

\noindent The kernel now supplies the remaining signs in the Forrelation sum. Projecting both one-hot registers back onto $\ket{W_N}$ causes the signed amplitudes to interfere and yields the desired test.

\begin{lemma}[Single-pair Forrelation test]
\label{lem:single-test}
There is a polynomial-size constant-depth $\QAC$ test on
$\ket{W_x}\ket{W_y}$ whose acceptance probability is exactly
\begin{equation}
q(x,y)
=
\frac{\Forr_N(x,y)^2}{N}.
\label{eq:single-test-probability}
\end{equation}
\end{lemma}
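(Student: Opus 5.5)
The test applies the kernel of \Cref{lem:kernel} to $\ket{W_x}\ket{W_y}$ and then measures whether both registers have returned to $\ket{W_N}\ket{W_N}$.

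\textbf{Amplitude computation.} Applying $K_N$ to the product state gives
\[
K_N\ket{W_x}\ket{W_y}=\frac{1}{N}\sum_{u,v}(-1)^{x_u+y_v+u\cdot v}\ket{\oh(u),\oh(v)}.
\]
Its overlap with $\ket{W_N}\ket{W_N}=\frac1N\sum_{u,v}\ket{\oh(u),\oh(v)}$ is
\[
\frac{1}{N^2}\sum_{u,v}(-1)^{x_u+y_v+u\cdot v}=\frac{N^{3/2}}{N^2}\Forr_N(x,y)=\frac{\Forr_N(x,y)}{\sqrt N}.
\]
So the probability of projecting onto $\ket{W_N}\ket{W_N}$ is exactly $\Forr_N(x,y)^2/N$.

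\textbf{Realizing the projection.} Let $U_W$ be the clean unitary from \Cref{fact:W-preparation}, and apply $U_W^\dagger\otimes U_W^\dagger$ to the two one-hot registers, each with a fresh block of zero ancillae. Because $U_W$ maps $\ket{0^N}\ket{0^A}$ to $\ket{W_N}\ket{0^A}$, the component of our state along $\ket{W_N}\ket{W_N}$, tensored with the zero ancillae, is sent to the all-zero string on the one-hot registers and both ancilla blocks. Unitarity preserves the overlap, so the amplitude on the all-zero string is exactly $\Forr_N(x,y)/\sqrt N$.

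\textbf{Acceptance bit.} Compute into a fresh output qubit the predicate that all $2N+2A$ qubits are zero. This is one generalized Toffoli with negated controls, implemented by $X$ gates on each side, so it is in $\QAC$. Measuring the output qubit accepts with probability $\Forr_N(x,y)^2/N$.

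\textbf{Resource count.} The circuit consists of a constant number of constant-depth, polynomial-size pieces: $K_N$, two copies of $U_W^\dagger$, and one Toffoli. The whole test is therefore polynomial-size and constant-depth.

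\textbf{Main subtlety.} The only point needing care is that $U_W$ acts exactly on fresh zero ancillae, so that $U_W^\dagger$ inverts it on the relevant subspace. Exactness and cleanliness in \Cref{fact:W-preparation} give this directly. The input register is untouched after the phase step, so it does not affect the overlap.
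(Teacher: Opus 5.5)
Your proposal is correct and follows essentially the same route as the paper. The paper also applies $K_N$ and computes the overlap $\Forr_N(x,y)/\sqrt N$ with $\ket{W_N}\ket{W_N}$, then undoes the preparation with $U_W^\dagger$ on each one-hot register together with its clean ancillae, and finally flags the all-zero component with a generalized NOR before measuring.
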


\begin{proof}
After applying the kernel,
\begin{equation}
K_N\ket{W_x}\ket{W_y}
=
\frac1N
\sum_{u,v}
(-1)^{x_u+y_v+u\cdot v}
\ket{\oh(u),\oh(v)}.
\end{equation}
Its overlap with $\ket{W_N}\ket{W_N}$ is
\begin{align}
\alpha(x,y)
&:=
(\bra{W_N}\otimes\bra{W_N})
K_N\ket{W_x}\ket{W_y}
=
\frac1{N^2}
\sum_{u,v}
(-1)^{x_u+y_v+u\cdot v}
=
\frac{\Forr_N(x,y)}{\sqrt N}.
\label{eq:single-amplitude}
\end{align}
Apply $U_W^\dagger$ separately to the two one-hot registers, together with their clean preparation ancillae. A generalized NOR toggles an output flag precisely on the all-zero component. Measuring this flag therefore accepts with probability
\begin{equation}
    |\alpha(x,y)|^2
    =
    \frac{\Forr_N(x,y)^2}{N}.
\end{equation}
\end{proof}

\begin{remark}[Loss of the sign]
The test depends on $\Forr_N(x,y)$ only through its square. It therefore does not distinguish positive from negative Forrelation, which is why our promise separates a positive YES region from a small-magnitude NO region.
\end{remark}

\subsection{Amplification and Proof of \Cref{thm:main-upper}}
\label{sec:amplification}

By \Cref{lem:single-test}, a single Forrelation test has acceptance
probability $\Forr_N(x,y)^2/N$. Under the promise in
\Cref{def:gap-forr}, this gives a constant multiplicative gap between the
YES and NO cases. We amplify this gap coherently by preparing
$R=\Theta(N/\delta^2)$ effective copies of the test, running them in
parallel into separate flag qubits, reversibly computing the OR of these
flags into a single output qubit, and measuring only this final output.
Since $\delta\ge\log^{-c}N$, we have $R\le N\polylog N$, so
\Cref{prop:phase-repetition} lets us realize these effective copies from a
single copy of each input string with negligible error.

\begin{proof}[Proof of \Cref{thm:main-upper}]
Choose
\begin{equation}
R
:=
\left\lceil
\frac{5N}{4\delta^2}
\right\rceil.
\label{eq:R-choice}
\end{equation}
For sufficiently large $N$, we have
$N\le R\le N\log^{2c+1}N$, so \Cref{prop:phase-repetition} applies.
Applying it separately to $x$ and $y$ gives a joint state within
$2\eta_N$ of $\ket{W_x}^{\otimes R}\otimes\ket{W_y}^{\otimes R}$,
in trace distance, where $\eta_N=N^{-\omega(1)}$.

Pair the $R$ phase-encoded registers for $x$ with those for $y$, and
coherently run the test from \Cref{lem:single-test} on every pair in
parallel, writing the outcome of each test into a flag qubit. Reversibly
compute the OR of the $R$ flags into a fresh output qubit and measure only
this output. For the ideal product state, let $P_R(x,y)$ denote the probability that
the final output qubit is $1$. By \Cref{lem:single-test},
\begin{equation}
P_R(x,y)
=
1-
\left(
1-\frac{\Forr_N(x,y)^2}{N}
\right)^R,
\label{eq:parallel-acceptance}
\end{equation}
since the output is $0$ exactly when all $R$ coherent test flags are $0$. On a YES input,
\begin{equation}
P_R(x,y)
\ge
1-\exp(-R\delta^2/N)
\ge
1-e^{-5/4}
>
0.71.
\label{eq:yes-bound}
\end{equation}
On a NO input,
\begin{equation}
P_R(x,y)
\le
\frac{R\delta^2}{4N}
\le
\frac5{16}
+
\frac{\delta^2}{4N}
<
\frac13
\end{equation}
for sufficiently large $N$, using $1-(1-q)^R\le Rq$. The $2\eta_N$ trace-distance error changes the final acceptance probability
by at most $2\eta_N$. Since $\eta_N=N^{-\omega(1)}$, the actual circuit
therefore has completeness at least $2/3$ and soundness at most $1/3$ for
sufficiently large $N$.

Every stage has depth $O_c(1)$ and polynomial size in $N$. The two banks of one-hot registers use $2RN=\Theta(N^2/\delta^2)$ qubits, with only polynomial additional workspace for state preparation, label extraction, and truncated parity. Hence the overall circuit has constant depth and polynomial size in the input length $N$. Equivalently, since $N=2^m$, its size may be exponential in the address length $m$. This proves \Cref{thm:main-upper}.
\end{proof}
\section{Acknowledgments and AI Statement}
The author is supported by NSF grant 2311733 and DOE grant DE-SC0024124.
The author thanks Uma Girish, Malvika Joshi, Jackson Morris, Gregory Rosenthal, Avishay Tal, and John Wright for insightful discussions and feedback on the manuscript. In particular, Avishay Tal suggested Forrelation as a natural candidate for a decision separation between $\QAC$ and $\AC$.

The author also acknowledges the use of AI in developing and writing this manuscript. Specifically, the author had worked out the main one-hot Forrelation construction, but initially believed that the amplification step was obstructed by the need to reuse the input bits, closely related to the binary-to-one-hot and fan-out bottlenecks discussed in the paper. After a conversation with Jackson Morris about the techniques in \cite{GrierMorrisWu2026}, the author used ChatGPT (GPT-5.6 Sol) to examine that work for potentially applicable constructions. It identified the phase-state repetition construction of Grier--Morris--Wu as a possible way to amplify the Forrelation signal without replicating the input. The author subsequently verified and adapted this construction to the proof presented in this manuscript. ChatGPT was also used, under the author's supervision, to assist with exposition, editing, and organization of the manuscript.

\begingroup
\small
\bibliographystyle{alpha}
\bibliography{biblio} 
\endgroup

\end{document}